\documentclass{article}

\usepackage{arxiv}
\usepackage[authoryear]{natbib}

\usepackage[utf8]{inputenc} 
\usepackage[T1]{fontenc}    
\usepackage{hyperref}       
\usepackage{url}            
\usepackage{booktabs}       
\usepackage{amsfonts}       
\usepackage{nicefrac}       
\usepackage{microtype}      
\usepackage{lipsum}
\usepackage{graphicx}
\usepackage{siunitx}
\usepackage{amsmath}
\usepackage{graphicx}
\usepackage{amssymb}
\usepackage{minted}
\usepackage{subcaption}
\usepackage{multirow}
\usepackage{booktabs}
\usepackage{algorithm}   
\usepackage{algpseudocode} 
\usepackage{tabularx}
\providecommand{\R}{{\mathbb R}}
\providecommand{\N}{{\mathbb N}}
\providecommand{\E}{{\mathbb E}}

\DeclareMathOperator*{\argmin}{argmin}

\providecommand{\BA}{{\mathbf{A}}}
\providecommand{\BB}{{\mathbf{B}}}
\providecommand{\BD}{{\mathbf{D}}}
\providecommand{\Bz}{{\mathbf{z}}}
\providecommand{\Bx}{{\mathbf{x}}}
\providecommand{\By}{{\mathbf{y}}}
\providecommand{\BI}{{\mathbf{I}}}
\providecommand{\Bu}{{\mathbf{u}}}
\providecommand{\BJ}{{\mathbf{J}}}
\providecommand{\BK}{{\mathbf{K}}}

\providecommand{\Btheta}{{\boldsymbol{\theta}}}
\providecommand{\reg}{{\mathcal J}}
\providecommand{\Bla}{{\boldsymbol{\lambda}}}
\usepackage{upgreek}

\newcommand{\Veta}       {\boldsymbol{\upeta}} %
\newcommand{\Vmu}       {\boldsymbol{\upmu}} %

\usepackage{mathtools}

\usepackage{amsthm}
\newtheorem{theorem}{Theorem}
\newtheorem{proposition}{Proposition}

\title{Deep Image Prior for Computed Tomography Reconstruction}

\author{
  Simon Arridge \\
 Department of Computer Science \\
 University College London \\
   \And
  Riccardo Barbano \\
 Department of Computer Science \\
 University College London \\
  \AND
  Alexander Denker \\
 Department of Computer Science \\
 University College London \\
  \And
  \v{Z}eljko Kereta \\
 Department of Computer Science \\
 University College London \\
}

\begin{document}

\maketitle
\begin{abstract}
We present a comprehensive overview of the Deep Image Prior (DIP) framework and its applications to image reconstruction in computed tomography.
Unlike conventional deep learning methods that rely on large, supervised datasets, the DIP exploits the implicit bias of convolutional neural networks and operates in a fully unsupervised setting, requiring only a single measurement, even in the presence of noise.
We describe the standard DIP formulation, outline key algorithmic design choices, and review several strategies to mitigate overfitting, including early stopping, explicit regularisation, and self-guided methods that adapt the network input.
In addition, we examine computational improvements such as warm-start and stochastic optimisation methods to reduce the reconstruction time. 
The discussed methods are tested on real $\mu$CT measurements, which allows examination of trade-offs among the different modifications and extensions.
\end{abstract}

\keywords{Image Reconstruction \and Deep Learning \and Deep Image Prior}

\section{Introduction}
Image reconstruction plays a critical role across many scientific disciplines and technological domains.
The range of applications extends from computer vision tasks such as image denoising, deblurring, or super-resolution, to complex scientific challenges in which only indirect observations, acquired by dedicated sensing devices, are available.
Examples of the latter include problems in microscopy, astronomy, geophysics, and medical imaging.
The common thread between these applications is the inverse nature of the problem: we seek to recover an image from corrupted or incomplete indirect observations.
Inverse problems in imaging are mathematically formulated as recovering or approximating $\Bx$, the unknown ground truth, given measured data $\By^\delta$, from an equation
\begin{align}
    \By^\delta = \BA(\Bx) + \Veta,
\end{align}
where $\BA: \R^n \to \R^m$ is the forward operator, describing the physics of the data acquisition or distortion process, and $\Veta$ is the measurement noise.
This chapter focuses on discrete linear inverse problems, where noisy data $\By^\delta$ arises from linear (or linearised) finite projections, and the forward operator $\BA \in \R^{n \times m}$ is a matrix relating measurements and images.

Inverse problems are difficult to solve because they are often ill-posed in the sense of Hadamard.
This can manifest as non-existence or non-uniqueness of solutions, or through severe instability, whereby small perturbations in the measured data can result in large errors in the reconstructed solutions \citep{tikhonov1977solutions,Engl1996}.
These issues bear a special significance for medical imaging applications, due to their implications on quality and validity of a diagnosis or course of treatment.
Inverse problems in imaging have traditionally been addressed through \emph{knowledge-based} approaches that assume full or partial access to an approximate physical model of the measurement process.
For example, image reconstruction in computed tomography (CT) has been addressed with analytical and iterative reconstructions methods \citep{natterer2001mathematics}.
For 2D-CT the filtered back projection (FBP) \citep{kak2001principles} is the standard analytical approach, whose susceptibility to noise and sparsity of the measurement data has resulted in the development of projection-based iterative reconstruction algorithms, e.g., the algebraic reconstruction technique \citep{gordon1970algebraic}, or the simultaneous iterative reconstruction technique \citep{andersen1984simultaneous}.
Building on these foundations, variational regularisation emerged as a comprehensive reconstruction framework \citep{scherzer2009variational}.
It reformulates the reconstruction as an optimisation problem
\begin{align}
    \label{eq:var_obj}
    \hat{\Bx} \in \argmin_{\Bx \in \R^n} \ell(\BA \Bx, \By^\delta) + \alpha \mathcal{J}(\Bx),
\end{align}
that combines the data fidelity $\ell:\R^m \times \R^m \to \R_{\ge 0}$ with a regularising image prior $\mathcal{J}:\R^n \to \R_{\ge 0}$, which promotes desired image properties while simultaneously stabilising the reconstruction process. 
In spite of their success, knowledge-based approaches have several shortcomings, such as reliance on hand-crafted regularisers, need for domain expertise or the use of overly simplistic models of structures found in natural or medical images.
As a result, recent decades have seen a paradigm shift with a growing flood of deep learning–based approaches that have revolutionised image reconstruction in medical imaging.
A wide spectrum of these methods is discussed by \cite{arridge2019solving,ongie2020deep,hertrich2025learning,habring2024neural}, including post-processing approaches, end-to-end mappings from measurements to images, and unrolled networks, just to name a few.

The vast majority of these methods critically depend on the availability of large, paired datasets for training.
Considerable effort has been devoted to developing approaches that mitigate this reliance on paired training data.
In particular, a number of self-supervised frameworks were proposed; see the recent review by \cite{tachella2026self}.
In this context, self-supervision refers to approaches that assume access only to a set $\{\By^{(i)}\}_i$ of (noisy) measured data for training, without requiring ground truth images. 
For example, Noise2Noise \citep{lehtinen2018noise2noise}, eliminates the need for clean reference data by training on pairs of independently corrupted measurements.
Extensions such as Noise2Void \citep{krull2019noise2void} and Noise2Self \citep{batson2019noise2self} further reduce data requirements using only single noisy images.
While these methods were originally developed for denoising tasks, they can be extended to linear inverse problems by employing them as post-processing steps applied to an initial reconstruction.
In contrast, Noise2Inverse \citep{hendriksen2020noise2inverse} adapts these principles directly to linear inverse problems by exploiting redundancies in the measurement domain.
Furthermore, equivariant imaging \citep{chen2021equivariant} exploits symmetries of the forward operator $\BA$ by enforcing equivariance constraints during training.

This work focuses on the Deep Image Prior (DIP), an unsupervised method proposed by \cite{ulyanov2018deep}, which requires only a single (noisy) measurement for reconstruction.
The principal idea of DIP is to reparametrise the image to be reconstructed as the output of a convolutional neural network (CNN).
The parameters of the network are then optimised by solving a minimisation problem analogous to \eqref{eq:var_obj}, but with respect to the network parameters.
The regularising effects are imposed by the fixed convolutional architecture, early stopping, and other factors described in the following sections.
Owing to this formulation, the framework is fully unsupervised, requiring no dataset of ground truth images.

This chapter gives an overview of key extensions to the DIP, with a focus on their use for CT reconstruction.
We present strategies for mitigating overfitting, such as incorporating additional regularisation or improved early stopping rules, as well as strategies for reducing the computational costs, such as warm-start approaches or stochastic gradient descent.
We also explore connections to related concepts, such as self-supervised denoising and test-time-training \citep{sun2020test}.

The rest of this chapter is structured as follows. 
We begin by introducing the general DIP framework in Section \ref{sec:general_framework}. 
In Section \ref{sec:theoretical_approaches}, we briefly cover theoretical advances in understanding the behaviour of the DIP, particularly looking at the under- and over-parametrised regimes. 
We discuss methods to prevent overfitting and reduce the computational costs in Section \ref{sec:overfitting} and Section \ref{sec:computational_cost}, respectively. 
Finally, we compare these different approaches and extensions on a real-world $\mu$CT dataset in Section \ref{sec:numerical_experiments}. 
We draw conclusions in Section \ref{sec:conclusion}.  

\section{General Framework}
\label{sec:general_framework}
We start by describing the general DIP framework as developed by \cite{ulyanov2018deep}. 
For this, we denote the neural network as $f: \R^n \times \R^p \to \R^n$ with network input $\Bz \in \R^n$ and parameters $\Btheta \in \R^p$. 
The DIP optimises the network parameters $\Btheta$ for a fixed network input $\Bz$ such that the network output is consistent to the noisy measurements, leading to the optimisation problem
\begin{align}
    \hat{\Btheta} \in \argmin_{\Btheta \in \R^p} \left\{ \ell(\BA f_\Btheta(\Bz), \By^\delta) \coloneqq \frac12\|\BA f_\Btheta(\Bz)-\By^\delta\|_2^2 \right\}. \label{eq:dip_vanilla}
\end{align}
The reconstructed image is then retrieved by evaluating the network at the optimised parameters as $\hat{\Bx} = f_{\hat{\Btheta}}(\Bz)$.
We will refer to this formulation as the \textit{vanilla DIP}. 
The optimisation procedure is analogous to that of the variational objective in \eqref{eq:var_obj}, but with two key differences. 
First, the optimisation is performed over the network parameters rather than directly in the image space.
Second, the vanilla DIP formulation \eqref{eq:dip_vanilla} does not incorporate the regulariser $\mathcal{J}$.
The optimisation process is summarised in Algorithm \ref{alg:vanilla_dip}.
In the following, we use the mean squared error as the data fidelity term $\ell$, although alternative choices are also possible.
In the following sections we will discuss several variations and extensions of the vanilla DIP, each accompanied by an algorithm in a format consistent with Algorithm~\ref{alg:vanilla_dip}.

The DIP framework builds on the empirical observation that although neural networks can often fit any signal, some architectures, particularly CNNs, capture natural image structures faster than noise. 
This behaviour is often linked to the spectral bias of CNNs, whereby low-frequency components (associated with natural image structures) are captured early in the optimisation process, while high-frequency components (associated with noise) are learned later.
This tendency can be interpreted as a form of implicit regularisation, often referred to as regularisation by architecture.
In the example in Figure \ref{fig:dip_reco_iters} we can observe that after $100$ iterations the general outline of the walnut is already visible. Finer details gradually emerge in subsequent iterations.
\begin{figure}[t]
 \begin{subfigure}[t]{.19\textwidth}
    \includegraphics[width=1.0\linewidth]{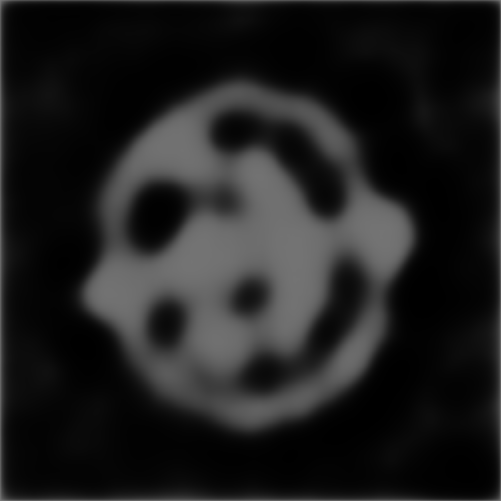}%
     \captionsetup{justification=centering}
    \caption*{Iteration $100$}
\end{subfigure}%
\hfill
 \begin{subfigure}[t]{.19\textwidth}
    \includegraphics[width=1.0\linewidth]{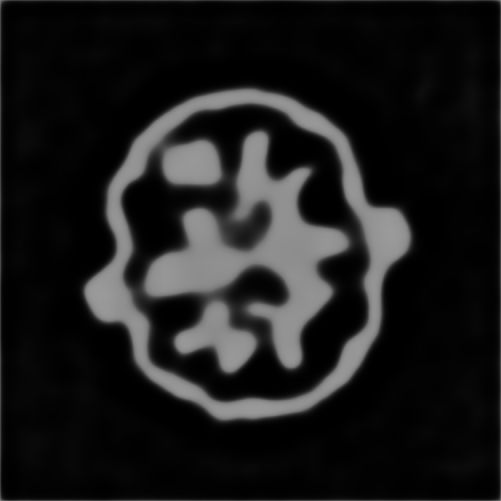}%
         \captionsetup{justification=centering}
    \caption*{Iteration $200$}
\end{subfigure}%
\hfill
 \begin{subfigure}[t]{.19\textwidth}
    \includegraphics[width=1.0\linewidth]{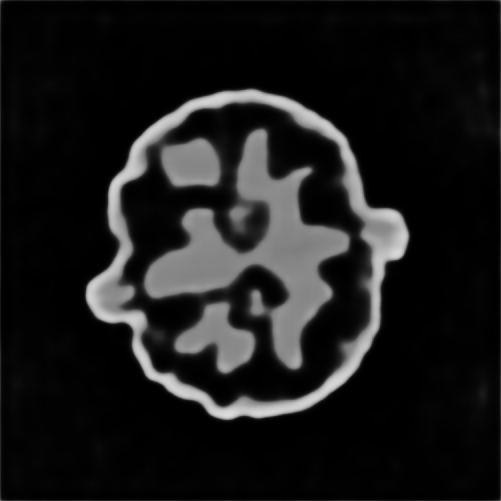}%
         \captionsetup{justification=centering}
    \caption*{Iteration $400$}
\end{subfigure}%
\hfill
 \begin{subfigure}[t]{.19\textwidth}
    \includegraphics[width=1.0\linewidth]{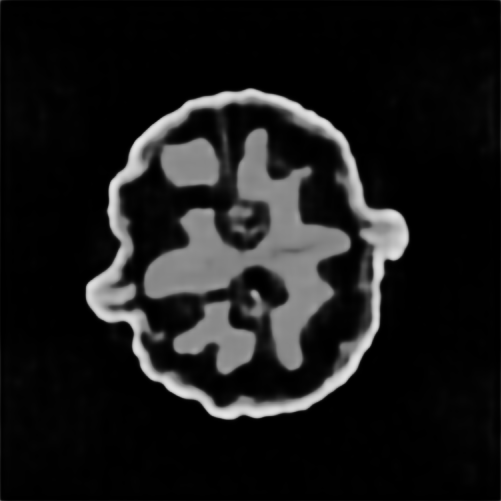}%
         \captionsetup{justification=centering}
    \caption*{Iteration $600$}
\end{subfigure}%
\hfill
 \begin{subfigure}[t]{.19\textwidth}
    \includegraphics[width=1.0\linewidth]{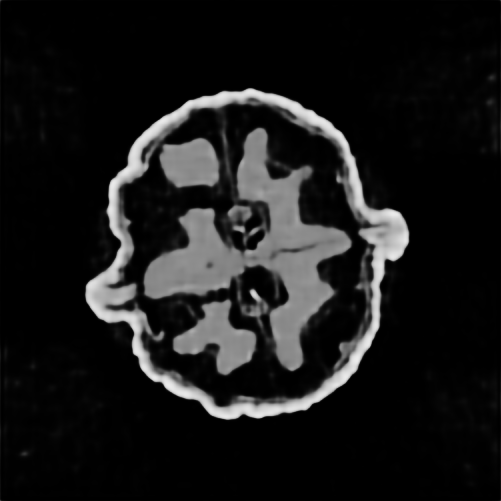}%
         \captionsetup{justification=centering}
    \caption*{Iteration $1000$}
\end{subfigure}%
\caption{Vanilla DIP on the $\mu$CT walnut with $60$ angles and $128$ detector pixels. We show the DIP reconstruction at different iterations of the optimisation process. Details in Section \ref{sec:numerical_experiments}.}
\label{fig:dip_reco_iters}
\end{figure}

\begin{algorithm}[t]
\caption{Deep Image Prior}
\label{alg:vanilla_dip}
\begin{algorithmic}[1]
\Require Forward operator $\mathbf{A}$, observation $\mathbf{y}^\delta$, maximum number of iteration $N_\text{max}$, model input $\Bz$, model architecture $f_\Btheta$
\For{$k=1$ to $N_\text{max}$}
    \State $\tilde{\Bx} =  f_\Btheta(\Bz)$
    \State $L(\Btheta) = \frac{1}{2} \| \BA \tilde{\Bx} - \By^\delta \|_2^2 $
    \State Take gradient step w.r.t.~$\Btheta$ using $\nabla L(\Btheta)$
\EndFor
\State \textbf{return} $f_\Btheta(\Bz)$
\end{algorithmic}
\end{algorithm}
However, ensuring the reconstruction is of the best quality is not straightforward, as optimisation must be stopped before the network starts fitting the noise. 
This is illustrated in Figure \ref{fig:vanilla_dip}, which shows a decrease in the loss function, quantifying data fidelity, over the entire optimisation trajectory. 
The best reconstruction, as measured by PSNR, is obtained relatively early in the optimisation (see the dashed vertical green line in Figure \ref{fig:vanilla_dip}).
We also show the reconstruction obtained by early stopping the optimisation process (see the dashed vertical red line in Figure~\ref{fig:vanilla_dip}) using the method by \cite{wang2021early}, described in Section~\ref{sec:early_stopping}. This early stopped reconstruction is better than the final reconstruction (dashed vertical blue line in Figure~\ref{fig:vanilla_dip}), but there is still a gap to the peak PSNR.
There is an even larger discrepancy between the highest PSNR attainable with DIP and the PSNR of the final reconstruction (using fully optimised network parameters).
This form of overfitting poses a major challenge as the number of optimisation steps required to obtain the best reconstruction is not known a priori and strongly depends on the task, the neural network architecture, the noise level in the measurement data, and the used optimisation algorithm.
Avoiding or mitigating overfitting has spurred significant research in the DIP community.
An overview of these methods is provided in Section \ref{sec:overfitting}.

Note that overfitting to noise is not inherent to the DIP, it is a direct consequence of the fact that the optimisation function \eqref{eq:dip_vanilla} contains only the data fidelity term. 
If no regularisation functional is included in the variational formulation \eqref{eq:var_obj}, overfitting will also occur, and early stopping must be applied to achieve a regularisation method, which then leads to the classical Landweber algorithm \citep{landweber1951iteration}.

\begin{figure}[t]
    \centering
\includegraphics[width=1.0\linewidth]{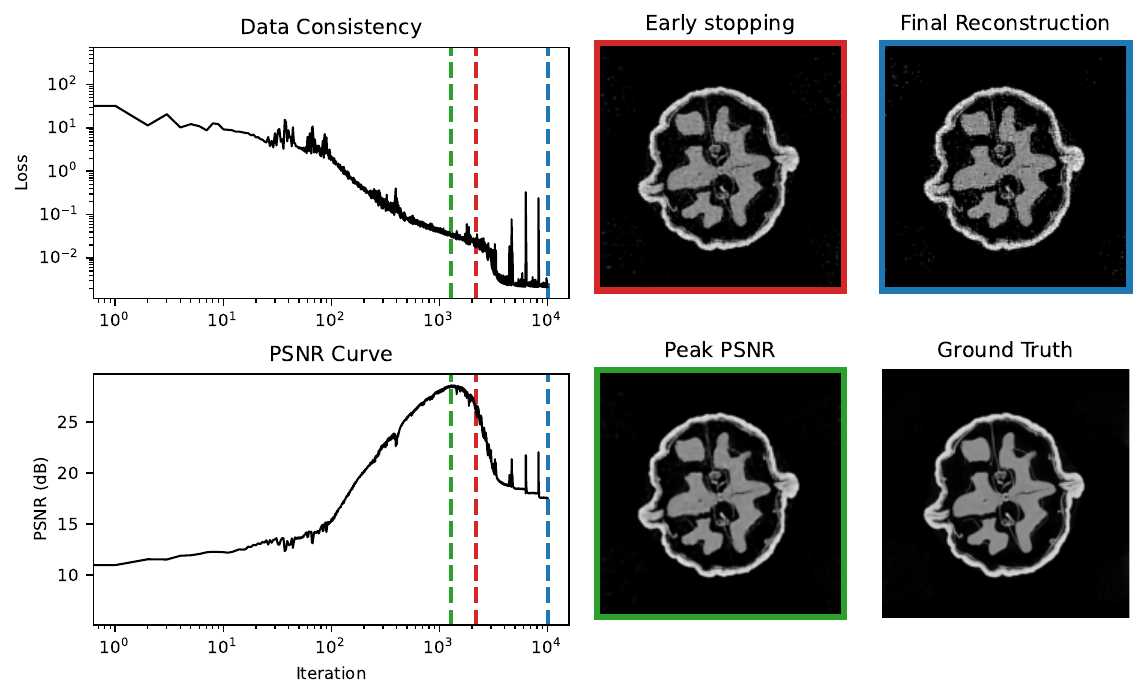}
    \caption{Vanilla DIP on the $\mu$CT walnut with $60$ angles and $128$ detector pixels. For details see the description in Section \ref{sec:numerical_experiments}.} 
    \label{fig:vanilla_dip}
\end{figure}

\subsection{Algorithmic Design Choices}\label{sec:algo_design_choices}

There are several important algorithmic choices which influence the performance of the DIP: the network input $\Bz$, the initialisation of the network's parameters $\Btheta_0$, and the design of the architecture of the network.

\textbf{Network Input}:
In the original DIP work, the authors use both random noise $\Bz \sim \mathcal{N}(0, \sigma^2 \BI_m)$ as well as deterministic ``mesh-grid'' inputs (see Figure~22 in \cite{ulyanov2018deep}). 
The design of the input can be characterised along two main axes. 
The first concerns its relevance to the reconstruction task: the input can be uninformative (e.g., a random noise mask) or informative (e.g., an approximate reconstruction obtained with classical methods).
Studies have shown that using informative inputs can both improve the quality of reconstruction and reduce computation time. 
Examples include using a rough initial reconstruction such as the FBP \citep{barbano2022educated,tachella2021neural}, or leveraging a related reference image \citep{zhao2020reference}.
The second axis concerns whether the input is updated during optimisation: it can either remain fixed or be updated jointly with the network parameters.
Here, several recent works have proposed to update the input during optimisation \citep{liang2025analysis,alkhouri2024image}. These approaches will be discussed in detail later, see in particular Algorithm~\ref{alg:self_guided_dip} and Algorithm~\ref{alg:aSeqDIP}.

\textbf{Network Architecture}: Architectural design strongly influences both role the regularising effect and the tendency to fit high-frequency structures, such as noise or streaking artefacts that arise in CT reconstruction.
While most DIP implementations employ a standard convolutional autoencoder, in particular the popular U-Net architecture \citep{ronneberger2015u}, other choices have been explored \citep{korkmaz2022unsupervised,schmidt2021deep,dittmer2020regularization}.
For example, \cite{heckel2018deep} observe that even simple network architectures can achieve good performance.
An empirical comparison lies beyond the scope of this work and our numerical studies use the same U-Net architecture across methods.

\textbf{Initialisation of Parameters}:
In most DIP studies, the parameters $\Btheta$ are initialised randomly, using standard deep learning initialisation strategies.
However, the choice of initialisation influences the optimisation trajectory and thus the reconstructed image. \cite{barbano2022educated} explore strategies for improved initialisation.
They ``warm-start'' the network $f_\Btheta$ on a synthetic paired dataset using a supervised procedure, and use the resulting parameters to initialise the subsequent (unsupervised) reconstruction task.
This approach aims to reduce reconstruction time, since each DIP reconstruction would otherwise require training the entire network from scratch for reach new problem

\section{Theoretical Approaches}\label{sec:theoretical_approaches}

In this section, we discuss several modes of analysis of the DIP. We begin with the analytic deep prior (APD) in Section \ref{sec:adp}, followed by a discussion of the under-parametrised DIP in Section \ref{sec:deep_decoder}. We then turn to the neural tangent kernel perspective for sufficiently over-parametrised DIPs in Section \ref{sec:linearised_dip}.

\subsection{Analytic Deep Prior}\label{sec:adp}
The empirical success of the DIP is typically attributed to the implicit bias of the network architecture towards natural images. 
However, the reliance on large, over-parametrised networks makes establishing rigorous theoretical guarantees difficult.
To bridge this gap, a line of research focuses on \textit{regularisation by architecture}.
In this context, \cite{dittmer2020regularization} introduce the ADP. 
The core intuition of the ADP is to replace the standard black-box network with a specialised architecture that can be explicitly represented as the minimisation of some functional, leveraging methods such as LISTA \citep{gregor2010learning} or unrolled proximal gradient schemes \citep{adler2018learned,jiu2021deep}.
In its discrete form, the ADP reformulates the DIP \eqref{eq:dip_vanilla} by framing it as a bilevel optimisation problem
\begin{align}
    \label{eq:adp1}
    &\min_{\BB \in \R^{m\times n}} \frac{1}{2} \| \BA \hat{\Bx}(\BB) - \By^\delta \|_2^2 \\ 
    \text{subject to} &\quad \hat{\Bx}(\BB) \coloneqq \argmin_\Bx \frac{1}{2} \| \BB \Bx - \By^\delta \|_2^2 + \alpha \mathcal{J}(\Bx), \label{eq:adp2}
\end{align}
where $\mathcal{J}: \R^n \to \R \cup \{ \infty \}$ is a convex regularisation functional and $\alpha>0$ is the regularisation strength. In this formulation, the matrix $\BB$ represents the learnable parameters. The lower-level problem \eqref{eq:adp2} induces a proximal gradient descent type architecture, where the $l$-th layer of the network architecture can be formulated as
\begin{align}
    \Bx^{(l+1)} &= \text{prox}_{\alpha \lambda \mathcal{J}}(\Bx^{(l)} - \lambda \BB^T (\BB \Bx^{(l)} - \By^\delta)) \\
    &= \text{prox}_{\alpha \lambda \mathcal{J}}((\mathbf{I} - \lambda \BB^T \BB) \Bx^{(l)} - \lambda \BB^T \By^\delta).
\end{align}
This expression can be interpreted as a feed-forward layer with weight matrix $\mathbf{I} - \lambda \BB^T \BB$, bias $- \lambda \BB^T\By$, and activation function $\text{prox}_{\alpha \lambda \mathcal{J}}$. Using a proximal mapping as the activation function is not unusual, for example the commonly used ReLU activation is the proximal mapping of the indicator function of $\R_{\ge 0}$.

By viewing the DIP as an explicit optimisation process, the ADP frameworks allows for the application of classical regularisation theory.
For instance, \cite{arndt2022regularization} leverage this connection to establish equivalence between the ADP and classical Ivanov regularisation. 
However, early stopping is a crucial component of the vanilla DIP that was not modelled in the original ADP. 
\cite{arndt2022regularization} resolve this limitation by proposing ADP-$\beta$ to explicitly integrate early stopping in the analysis.

Training the ADP requires solving the bilevel optimisation problem \eqref{eq:adp1}-\eqref{eq:adp2}, which introduces significant computational challenges. 
Traditionally, these have been addressed using two primary approaches: algorithm unrolling, and the implicit function theorem (IFT), which was popularised by deep equilibrium networks \citep{bai2019deep}. In algorithm unrolling the lower-level optimisation problem \eqref{eq:adp2} is truncated to a finite number of iterations. Gradients are then computed by backpropagating through the unrolled steps, which incurs a high memory cost. 
Alternatively, the IFT computes the gradients analytically at the solution of the lower-level problem.
This requires computing the inverse Hessian of the lower-level problem, which is computationally prohibitive for high-dimensional image data. Addressing these computational bottlenecks is currently a highly active area of research. \cite{salehi2025fast} introduced a faster bilevel training framework for the ADP, leveraging MAID \citep{salehi2025adaptively}, an inexact first-order method specifically designed to accelerate bilevel optimisation tasks.

\subsection{Under-parametrised DIP}
\label{sec:deep_decoder}
The deep decoder, introduced by \cite{heckel2018deep}, is an under-parametrised alternative to the usually over-parametrised, autoencoder-style CNNs typically used in DIP.
\textit{Under-parametrisation} in this context refers to the fact that the deep decoder has far fewer parameters than output pixels, effectively mapping a lower-dimensional parameter space to a high-dimensional image space. 
The deep decoder is composed of $d$ fixed upsampling operations, point-wise ($1\times 1$) convolutions, and channel normalisation. 
Formally, the $i$-th layer is given as
\begin{align}
    \Bx^{(i+1)} = \text{cn}(\sigma(\mathbf{U}_i \Bx^{(i)} \mathbf{C}_i)), \quad i=1, \dots, d-1,
\end{align}
where $\mathbf{U}_i$ is a fixed upsampling operator, $\mathbf{C}_i$ are the point-wise convolution weights, $\sigma$ is the activation function, and $\text{cn}$ denotes the channel normalisation. 
In the implementation, \cite{heckel2018deep} use bi-linear interpolation for upsampling and ReLU for activation.
The spatial structure in the deep decoder is given by the upsampling and the network only learns to mix features across channels. This architecture choice makes it inherently regularising, such that it cannot overfit to random noise.
Below, we restate Proposition 1 from \cite{heckel2018deep}, which considers a one layer deep decoder with a single output channel 
\begin{align}
\label{eq:one_layer_dd}
    G((\mathbf{C}_0, c_1)) = \text{ReLU}(\mathbf{U}_0 \Bx^{(0)} \mathbf{C}_0) c_1 \in \R^n ,
\end{align}
with $\mathbf{C}_0 \in \R^{k \times k}$ and $c_1 \in \R^k$ as learnable parameters, and $k$ denoting the number of input channels and $n_0$ the input dimensions.

\begin{proposition}
Consider the one layer deep decoder \eqref{eq:one_layer_dd} with arbitrary upsampling matrix $\mathbf{U}_0 \in \R^{n \times n_0}$ and input $\Bx^{(0)} \in \R^{n_0 \times k}$. Let $\eta \sim \mathcal{N}(0, \sigma^2 \mathbf{I}_n)$. Assume that $k^2  \log(n_0)/n \le 1/32$. Then with probability at least $1-2 n_0^{-k^2}$
\begin{align}
    \min_{(\mathbf{C}_0,c_1)} \| G(\mathbf{C}_0,c_1) - \eta\|_2^2 \ge \| \eta \|_2^2 \left(1 - 20 \frac{k^2 \log(n_0)}{n} \right).
\end{align}
\end{proposition}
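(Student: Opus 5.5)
The plan is to show that the range of $G$ lies in a union of a modest number of low-dimensional linear subspaces of $\R^n$. A Gaussian vector has only a small fraction of its energy in any fixed such subspace, with high probability, and a union bound then finishes the argument.

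\textbf{Step 1: reduce to a union of subspaces.} Write $\mathbf{B} = \mathbf{U}_0 \Bx^{(0)} \in \R^{n\times k}$. The output is $\text{ReLU}(\mathbf{B}\mathbf{C}_0) c_1$, and its $j$-th column is $\text{ReLU}(\mathbf{B}\mathbf{c}_j)$. Here $\mathbf{c}_j\in\R^k$ is the $j$-th column of $\mathbf{C}_0$. Applying ReLU to $\mathbf{B}\mathbf{c}_j$ amounts to multiplying by a diagonal $0/1$ matrix $\mathbf{D}_j = \mathrm{diag}(\mathbf{1}[\mathbf{B}\mathbf{c}_j>0])$. Hence every output has the form $\sum_{j=1}^k c_{1,j}\mathbf{D}_j \mathbf{B}\mathbf{c}_j$, which lies in the span of the columns of $[\mathbf{D}_1\mathbf{B},\dots,\mathbf{D}_k\mathbf{B}]$. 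This span has dimension at most $k^2$.

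The key combinatorial step is to count the sign patterns. The number of distinct patterns $\mathbf{D}$ obtained as $\mathbf{c}$ ranges over $\R^k$ equals the number of regions cut out by the $n$ hyperplanes $\{\mathbf{c}: \mathbf{b}_i^T\mathbf{c}=0\}$ through the origin in $\R^k$. Here $\mathbf{b}_i$ are the rows of $\mathbf{B}$. There is a subtlety: $\mathbf{B}$ is an upsampling of an $n_0\times k$ matrix, so the number of \emph{distinct} rows could be as large as $n$. I would therefore first try the standard bound on hyperplane arrangements, $\#\text{patterns}\le 2\sum_{i<k}\binom{n-1}{i}\le (cn)^k$. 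The stated probability $n_0^{-k^2}$ suggests the authors obtained a count like $n_0^{k}$ per channel. A bilinear upsampling matrix is piecewise linear in each interval, which may reduce the effective count, but I would not rely on it; I would accept $\log n$ in place of $\log n_0$ if needed. The total number of subspaces is then $N\le (\#\text{patterns})^k \le \exp(k^2\log(cn))$, and each subspace $S$ has $\dim S\le k^2$.

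\textbf{Step 2: concentration and union bound.} For a fixed subspace $S$ of dimension $d\le k^2$, write
\[
\min_{\mathbf{v}\in S}\|\mathbf{v}-\eta\|_2^2=\|\eta\|_2^2-\|P_S\eta\|_2^2 .
\]
Here $\|P_S\eta\|_2^2/\sigma^2\sim\chi^2_d$, so by Laurent--Massart
\[
\|P_S\eta\|_2^2\le \sigma^2\bigl(d+2\sqrt{dt}+2t\bigr)
\]
with probability at least $1-e^{-t}$. Also $\|\eta\|_2^2\ge \sigma^2(n-2\sqrt{nt})$ with probability at least $1-e^{-t}$. I would choose $t$ so that $e^{-t}N\le n_0^{-k^2}$, which gives $t\approx 2k^2\log n_0$ up to constants. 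A union bound over the $N$ subspaces then gives, simultaneously for all of them,
\[
\|P_S\eta\|_2^2\le C\sigma^2 k^2\log n_0 .
\]

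\textbf{Step 3: assemble.} The assumption $k^2\log(n_0)/n\le 1/32$ makes $\sqrt{nt}\ll n$, so $\|\eta\|_2^2\ge \sigma^2 n/2$. Therefore
\[
\frac{\|P_S\eta\|_2^2}{\|\eta\|_2^2}\le \frac{2Ck^2\log n_0}{n},
\]
and tracking constants should yield the factor $20$. Adding the two failure probabilities gives the $2n_0^{-k^2}$.

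The main obstacle is the count in Step 1: the sign patterns must be bounded in terms of $n_0$ rather than $n$ to match the stated probability and rate. Beyond that, the work is just tracking the constants.
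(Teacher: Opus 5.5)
The paper does not prove this statement; it only restates Proposition~1 of Heckel and Hand. Your proposal therefore has to stand on its own.

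\textbf{The gap, and why it cannot be closed.} Steps~2--3 are the right mechanism: the range lies in a union of subspaces of dimension at most $k^2$, then $\chi^2$ tails and a union bound. The gap is exactly the one you flag in Step~1. For a truly arbitrary $\mathbf{U}_0$ the number of activation patterns grows with $n$, and the $\log n_0$ bound is itself false.

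Take $k=n_0=3$ and $\Bx^{(0)}=\BI_3$. Let the $i$-th row of $\mathbf{U}_0$ be $\mathbf{a}_i^T=(1,\cos\theta_i,\sin\theta_i)$ with $\theta_i=2\pi i/n$. For $\mathbf{c}^{(j)}=(-\cos(2\pi/n),\cos\theta_j,\sin\theta_j)^T$ we get $\mathbf{a}_i^T\mathbf{c}^{(j)}=\cos(\theta_i-\theta_j)-\cos(2\pi/n)$. This is positive for $i=j$ and nonpositive otherwise. Hence $\text{ReLU}(\mathbf{U}_0\mathbf{c}^{(j)})$ is a positive multiple of the standard basis vector $\mathbf{e}_j$.

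Choosing $\mathbf{C}_0$ and $c_1$ accordingly gives $G=\eta_j\mathbf{e}_j$ for any $j$. So the minimal residual is at most $\|\eta\|_2^2-\max_j\eta_j^2$, and $\max_j\eta_j^2\approx 2\sigma^2\log n$. The hypothesis $9\log(3)/n\le 1/32$ holds for $n\ge 317$. Yet the proposition caps the removable energy at $180\log(3)\|\eta\|_2^2/n\approx 180\log(3)\,\sigma^2$, independently of $n$. So for $n$ large enough the inequality fails with probability tending to one, not holding with probability $1-2\cdot 3^{-9}$. The threshold is astronomically large given the constant $20$, so this is a technical rather than practical failure.

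This is not an artefact of a strange $\mathbf{U}_0$. With $n_0=4$ and rows $(1,\pm2,0),(1,0,\pm2)$ in $\Bx^{(0)}$, the same construction works with a nonnegative, row-stochastic $\mathbf{U}_0$. Bilinear upsampling does not rescue the count either. For $k=2$ its rows $(1-w)\mathbf{b}_l+w\mathbf{b}_{l+1}$, where $\mathbf{b}_l$ are the rows of $\Bx^{(0)}$, give up to $n$ distinct lines and hence $2n$ regions.

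\textbf{What your argument does prove.} It proves the proposition with $n_0$ replaced by $n$ throughout. Use at most $2\sum_{i<k}\binom{n-1}{i}\le n^k$ patterns per channel, so $N\le n^{k^2}$, and take $t=2k^2\log n$. This gives the factor $20$ and probability $1-2n^{-k^2}$ under $k^2\log(n)/n\le 1/32$. One correction to your Step~3: the lower tail for $\|\eta\|_2^2$ is a single event and needs its own $t$. The stated hypothesis does not control $\sqrt{nt}$ once $t$ contains $\log n$.

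\textbf{Where the stated constants come from.} They are exactly what your scheme produces from a count $N\le n_0^{k^2}$ with $t=2k^2\log n_0$:
\begin{itemize}
\item the hypothesis gives $2\sqrt{t/n}\le 1/2$, hence $\|\eta\|_2^2\ge n\sigma^2/2$;
\item the failure probabilities are $n_0^{k^2}e^{-t}+e^{-t}\le 2n_0^{-k^2}$;
\item $2k^2(1+2\sqrt{2\log n_0}+4\log n_0)/n\le 20k^2\log(n_0)/n$ for $n_0\ge 2$.
\end{itemize}

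Such a count needs extra structure on $\mathbf{U}_0$. One sufficient condition is nearest-neighbour upsampling, i.e.\ every row of $\mathbf{U}_0$ is a nonnegative multiple of a standard basis vector. Then the rows of $\mathbf{U}_0\Bx^{(0)}$ are nonnegative multiples of the $n_0$ rows of $\Bx^{(0)}$. The patterns are cut out by only $n_0$ hyperplanes, giving at most $2\sum_{i<k}\binom{n_0-1}{i}\le n_0^k$ per channel.

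So you have two options. Either prove the $\log n$ version for arbitrary $\mathbf{U}_0$, or add such a hypothesis and recover the statement with its exact constants.
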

This result suggests that the deep decoder can only fit a limited amount of noise, which is determined by the number of parameters relative to the dimension $n$. The authors further provide numerical evidence that a similar inequality also holds for deep decoders with more layers. However, this also limits the peak performance of the deep decoder, making it harder to fit fine textures or complex structures. 

\subsection{Linearised Deep Image Prior}
\label{sec:linearised_dip}
Considering the DIP in the infinite-channel limit, i.e., as the width of the CNN layers tends to infinity, provides a theoreticalframework for analysis, commonly referred to as the linearised DIP. In this over-parametrised regime, when trained using gradient descent with a suitable learning rate, the network's behaviour is governed by the Neural Tangent Kernel (NTK) \citep{jacot2018neural,lee2019wide,arora2019exact}. 
In particular, for a sufficiently over-parametrised CNN, the change in the parameters $\Btheta$ during training becomes negligible \citep{tachella2021neural}. 
Consequently, the behaviour of the network is largely dictated by the spectral properties of the Jacobian at initialisation.
In this regime, we can view the DIP as a linear model using a first order Taylor expansion around the initial parameters $\Btheta^{(0)}$, i.e.,
\begin{align}
    \label{eq:linearised_dip}
   \tilde f_\Btheta(\Bz) = f_{\Btheta^{(0)}}(\Bz) + \BJ (\Btheta - \Btheta^{(0)}),
\end{align}
where the Jacobian $\BJ := \frac{\partial f_\Btheta(\Bz)}{\partial \Btheta}\vert_{\Btheta=\Btheta^{(0)}} \in \R^{n \times p}$ is evaluated at $\Btheta^{(0)}$. 
Substituting the linearisation \eqref{eq:linearised_dip} into the vanilla DIP \eqref{eq:dip_vanilla} transforms the non-linear optimisation problem into a linear least-squares problem 
\begin{align}
        \label{eq:dip_ntk}
    \min_{\Btheta \in \R^p} \frac{1}{2}\| \BA [f_{\Btheta^{(0)}}(\Bz) + \BJ (\Btheta - \Btheta^{(0)})] - \By^\delta \|_2^2,
\end{align}
for the parameter change $\Btheta - \Btheta^{(0)}$. Following \cite{alkhouri2025understanding}, we can now directly write the training dynamics in the space of the network output.

\begin{theorem}\label{th:lin_dip}
Assume that the network output is given by the first order Taylor approximation
\begin{align}\label{eqn:linearised_dip}
   \tilde f_\Btheta(\Bz) = f_{\Btheta^{(0)}}(\Bz) + \BJ(\Btheta - \Btheta^{(0)}),
\end{align}
where $\BJ := \frac{\partial f_\Btheta(\Bz)}{\partial \Btheta}\vert_{\Btheta=\Btheta^{(0)}} \in \R^{n \times p}$ is the Jacobian at initialisation and the DIP is trained using gradient descent with step size $\tau > 0$. Then the training dynamics reduce to 
\begin{align}
    \tilde f_{\Btheta^{(k+1)}}(\Bz) = \tilde f_{\Btheta^{(k)}}(\Bz) - \tau \BJ \BJ^T \BA^T(\BA \tilde f_{\Btheta^{(k)}}(\Bz) - \By^\delta). \label{eq:lin_dip_dynamics}
\end{align}
\end{theorem}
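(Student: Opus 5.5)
The plan is to compute one gradient step on the linearised objective \eqref{eq:dip_ntk} in parameter space and then push it forward to output space through the affine map \eqref{eqn:linearised_dip}.

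\textbf{Step 1: the gradient.} Define
\[
L(\Btheta) = \tfrac12\|\BA \tilde f_\Btheta(\Bz) - \By^\delta\|_2^2 .
\]
Since $\Btheta \mapsto \tilde f_\Btheta(\Bz)$ is affine with constant derivative $\BJ$, the chain rule gives
\[
\nabla L(\Btheta) = \BJ^T\BA^T\bigl(\BA\tilde f_\Btheta(\Bz) - \By^\delta\bigr).
\]
This uses only that the derivative of $\tfrac12\|\BA\Bx-\By^\delta\|_2^2$ with respect to $\Bx$ is $\BA^T(\BA\Bx-\By^\delta)$. The key point is that $\BJ$ is the Jacobian at $\Btheta^{(0)}$ and does not depend on $\Btheta$. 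So the linearisation makes the chain rule exact at every iterate, not only at initialisation.

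\textbf{Step 2: the parameter update.} Gradient descent with step size $\tau$ gives
\[
\Btheta^{(k+1)} = \Btheta^{(k)} - \tau \BJ^T\BA^T\bigl(\BA\tilde f_{\Btheta^{(k)}}(\Bz) - \By^\delta\bigr).
\]

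\textbf{Step 3: push forward to output space.} Evaluate \eqref{eqn:linearised_dip} at $\Btheta^{(k+1)}$ and at $\Btheta^{(k)}$, then subtract. The constant term $f_{\Btheta^{(0)}}(\Bz)$ and the offset $\BJ\Btheta^{(0)}$ cancel, leaving
\[
\tilde f_{\Btheta^{(k+1)}}(\Bz) - \tilde f_{\Btheta^{(k)}}(\Bz) = \BJ\bigl(\Btheta^{(k+1)} - \Btheta^{(k)}\bigr).
\]
Substituting the update from Step 2 yields \eqref{eq:lin_dip_dynamics}. The matrix $\BJ\BJ^T$, the empirical NTK at initialisation, appears naturally at this point.

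There is no real obstacle here. The one point needing care is to state explicitly that gradient descent is run on the linearised loss, not the true network loss. Only under that reading is $\BJ$ fixed across iterations, which makes the recursion closed in output space.
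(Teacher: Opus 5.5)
Your proposal is correct and matches the paper's proof step for step: you evaluate the linearisation at consecutive iterates to get $\tilde f_{\Btheta^{(k+1)}}(\Bz) - \tilde f_{\Btheta^{(k)}}(\Bz) = \BJ(\Btheta^{(k+1)} - \Btheta^{(k)})$, then substitute the gradient-descent update. You also derive the gradient $\BJ^T\BA^T(\BA\tilde f_{\Btheta}(\Bz) - \By^\delta)$ by the chain rule and state that gradient descent runs on the linearised loss; the paper leaves both points implicit.
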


\begin{proof}
Evaluating \eqref{eqn:linearised_dip} for $\Btheta^{(k+1)}$ and $\Btheta^{(k)}$, and subtracting gives
\begin{align}
    \tilde f_{\Btheta^{(k+1)}}(\Bz) &= \tilde f_{\Btheta^{(k)}}(\Bz) + \BJ(\Btheta^{(k+1)} - \Btheta^{(k)}).
\end{align}
Under gradient descent, the change of the parameters is given by 
\begin{align}
    \Btheta^{(k+1)} - \Btheta^{(k)} = - \tau \BJ^T \BA^T(\BA \tilde f_{\Btheta^{(k)}}(\Bz) - \By^\delta),
\end{align}
for step size $\tau >0$. Plugging this in yields
\begin{align}
       \tilde f_{\Btheta^{(k+1)}}(\Bz) = \tilde f_{\Btheta^{(k)}}(\Bz) - \tau \BJ \BJ^T \BA^T(\BA \tilde f_{\Btheta^{(k)}}(\Bz) - \By^\delta).
\end{align}
\end{proof}

\begin{figure}[t]
    \centering
    \begin{subfigure}[b]{1.0\textwidth}
        \centering
        \includegraphics[width=\textwidth]{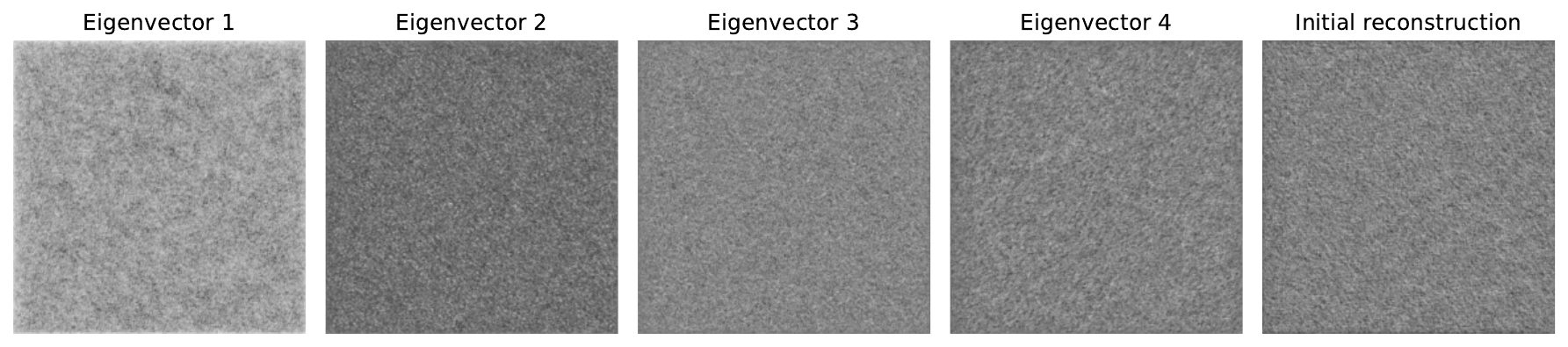}
        \caption{Eigenfunctions of the kernel $\BJ \BJ^T$ with random input $\Bz$ }
        \label{fig:jac_random_inp}
    \end{subfigure}
    
    \begin{subfigure}[b]{1.0\textwidth}
        \centering
        \includegraphics[width=\textwidth]{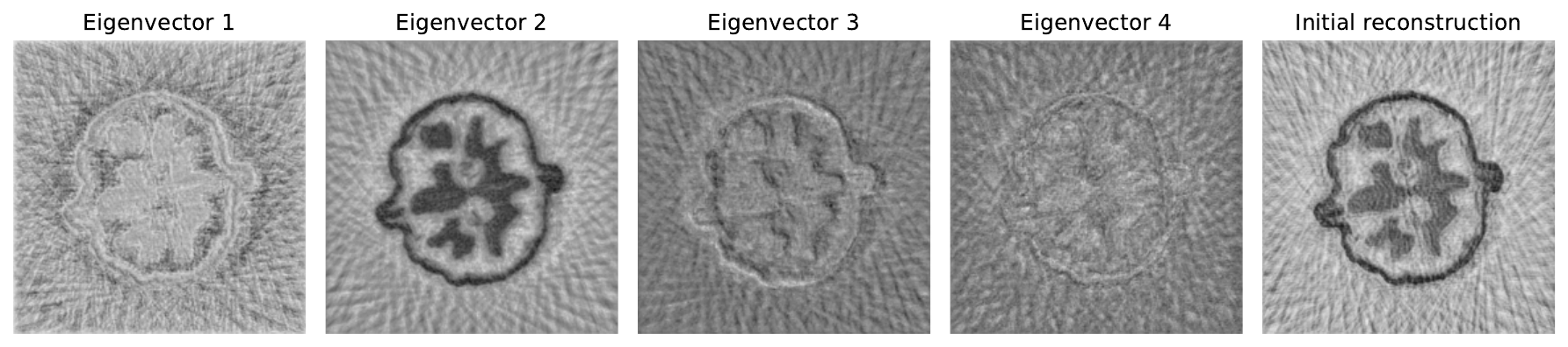}
        \caption{Eigenfunctions of the kernel $\BJ \BJ^T$ with FBP input }
        \label{fig:jac_fbp_inp}
    \end{subfigure}
    \caption{The first four images in each row correspond to the first 4 eigenfunctions of $\BJ \BJ^T$. The last image is the initial output of the (untrained) DIP. In (a) the input to the DIP is random noise, in (b) it is the FBP reconstruction of the $\mu$CT walnut image.}
    \label{fig:jacobian_eigenfunctions}
\end{figure}

The term $\BK := \BJ \BJ^T$ is referred to as the NTK \citep{jacot2018neural}. Assuming we have access to its SVD, we can express the training dynamics \eqref{eq:lin_dip_dynamics} in the eigenbasis as
\begin{align}
    \tilde f_{\Btheta^{(k+1)}}(\Bz) =  \tilde f_{\Btheta^{(k)}}(\Bz) - \tau \sum_{i=1}^n \sigma_i  \langle  \mathbf{g}^{(k)} , \Bu_i  \rangle \Bu_i, \label{eq:lin_dip_dynamics_svd}
\end{align}
where $\mathbf{g}^{(k)} = \BA^T (\BA \tilde f_{\Btheta^{(k)}}(\Bz) - \By^\delta)$ is the update direction and $\Bu_i$ the eigenvector for the eigenvalue $\sigma_i$. The update in \eqref{eq:lin_dip_dynamics_svd} gives rise to the spectral bias of the NTK, meaning that components corresponding to larger eigenvalues $\sigma_i$ are updated more quickly. 
Thus, the DIP prioritises fitting components of the solution aligning with the top eigenspaces. 
The spectral properties of the Jacobian $\BJ$ are both data and model-dependent, as the Jacobian is influenced by both the underlying network architecture and the input $\Bz$. 
To highlight this influence, we show the top four eigenfunctions of a randomly initialised DIP in Figure \ref{fig:jacobian_eigenfunctions}.
We compare the eigenvectors with either a random input $\Bz$ or the FBP $\Bz = \BA^\dagger \By^\delta$ as the initial input to the network, for the same $\mu$CT walnut data used in Figure \ref{fig:vanilla_dip}. 
We see that the eigenvectors for the FBP input more closely resemble the ground truth image. 

This spectral bias has been studied by \cite{liang2025analysis} and it has been shown that the performance of the linearised DIP depends strongly on the relationship between the NTK $\BK$ and the forward operator $\BA$.  
For compressed sensing \cite{liang2025analysis} show that accurate image reconstruction is possible provided the ground truth image $\Bx$ satisfies $\Bx \in \text{Range}(\BK)$, where $\BK$ is singular, and $\BA$ is incoherent with $\BK$. 
Incoherence of $\BA$ and $\BK$ is defined as $\text{Range}(\BK) \cap \text{ker}(\BA) = \emptyset$, 
meaning that when applying the forward operator on the network output the information is preserved. 

For specific two-layer over-parametrised CNNs it can be shown that the linearised model fits natural images faster than noise in the context of denoising \citep[Theorem 2]{heckel2019denoising} and compressed sensing \cite[Theorem 2]{heckel2020compressive}. 
However, Theorem \ref{th:lin_dip} and all the results referred to in this paragraph only hold for training with gradient descent. In practice, standard gradient descent is almost never used for training. Instead, most implementations rely on the adaptive ADAM optimiser, in which case the NTK limit and the first-order Taylor approximation no longer hold \citep{tachella2021neural}. 
In spite of this, the linearised DIP can be used as a tool for uncertainty estimation \citep{antoran2023uncertainty} or Bayesian experimental design \citep{barbano2022bayesian}.

\section{Preventing Overfitting}
\label{sec:overfitting}
Despite the effectiveness of the DIP framework in solving a wide range of imaging inverse problems, it is well known that the vanilla DIP \eqref{eq:dip_vanilla} starts overfitting to measurement noise after a critical number of iterations has been reached, provided the underlying network architecture has sufficient expressive capacity. 
Unlike other learning-based approaches, the DIP does not rely on external training and instead exploits the structural bias of the underlying neural network architecture, as discussed in Section \ref{sec:theoretical_approaches}. 
However, this implicit prior is often insufficient to prevent the network from overfitting to noise, particularly in severely ill-posed settings or under high noise levels. 
It has also been theoretically shown that the DIP can in certain simplified scenarios perfectly fit any signal \citep{van2018compressed}, though this analysis is restricted to single layer overparametrised networks and applications in compressed sensing.

To address this limitation, various strategies have been proposed to improve reconstruction quality and mitigate overfitting. 
These include robust early stopping rules, additional regularisation terms that explicitly penalise undesired solutions or use leverage pre-trained image denoisers, self-guidance and adaptive optimisation methods. In the following, we discuss these approaches and summarise them in Table~\ref{tab:dip_overfitting}. 

\begin{table}[htbp]
    \centering
    \caption{Comparison of reconstruction and regularisation approaches discussed in this section.}
    \label{tab:methods_comparison}
    \begin{tabularx}{\textwidth}{@{} X X X @{}}
        \toprule
        \textbf{Method} & \textbf{Reference} & \textbf{Algorithm} \\
        \midrule
        Vanilla DIP & \cite{ulyanov2018deep} & Algorithm \ref{alg:vanilla_dip} \\ 
        \midrule
        Running Variance Estimator (Early Stopping) & \cite{wang2021early} & Algorithm \ref{alg:early_stopping} \\
        \midrule 
        \multicolumn{3}{l}{\textit{Using additional explicit regulariser}} \\
        \midrule
        DIP + TV (Gradient-based) & \cite{baguer2020computed, liu2019image} & Algorithm \ref{alg:dip+tv} \\
        DIP + TV (HQS) & This work & Algorithm \ref{alg:dip+tv+hqs} \\
        DIP + WTV & \cite{cascarano2021admmtvdip} & ADMM \eqref{eq:wtv_admm1}--\eqref{eq:wtv_admm3}  \\
        Constrained DIP (adaptive WTV) & \cite{cascarano2023constrained} & alternating PGDA \\
        \midrule
        \multicolumn{3}{l}{\textit{Using denoiser (plug-and-play / RED)}} \\ \hline 
        DeepRED & \cite{mataev2019deepred} & Algorithm \ref{alg:dip+red+apg} \\
       
        DIP + Denoiser (ADAM) & This work & Algortihm \ref{alg:dip+denoiser+admm} \\
        \midrule
        \multicolumn{3}{l}{\textit{Using self-guidance}} \\ \midrule 
        Self-Guided DIP & \cite{liang2025analysis} & Algorithm \ref{alg:self_guided_dip} \\
        Autoencoding Sequential DIP (aSeqDIP) & \cite{alkhouri2024image} & Algorithm \ref{alg:aSeqDIP} \\
        \bottomrule
    \end{tabularx}
\label{tab:dip_overfitting}
\end{table}

\subsection{Early Stopping}\label{sec:early_stopping}
Early stopping has been a central component of the vanilla DIP \citep{ulyanov2018deep} and has a long history in both classical optimisation and in deep learning. 
As previously discussed, it is often critical for the vanilla DIP because the objective in \eqref{eq:dip_vanilla} lacks an explicit regularisation term, and the reconstruction is driven purely by the data fidelity.

On a qualitative level, it has been empirically observed that the DIP first captures coarse, low-frequency structures (corresponding to natural image structures), before eventually fitting high-frequency features, including noise.
This behaviour can be partly attributed to the optimisation objective, which solely focuses on minimising the data-fidelity term.
Further insights into this phenomenon can be attained through the lens of the spectral bias and the NTK theory, see Section \ref{sec:theoretical_approaches}.
In particular, according to NTK theory, the early iterations of the optimisation process align with the top eigenspaces of the NTK, while later iterations begin to fit directions associated with small eigenvalues, which often correspond to noise. 
Because of this inherent bias towards principal eigenspaces in the initial optimisation steps, early stopping can be an effective regularisation method.

However, selecting an effective early stopping point remains a major practical challenge. 
In inverse problems, one typically only has access to noisy measurements $\By^\delta$ and no ground truth data is available to directly monitor the reconstruction quality. 
As a result, early stopping rules are usually designed using heuristics or through visual inspection of the reconstruction (though the latter is not always practically viable), making the approach highly problem-dependent.
A non-exhaustive list of stopping rules for DIP includes methods that use effective degrees of freedom to monitor the optimisation \citep{jo2021rethinking}, a learned auto-encoder that monitors image quality \citep{li2021self}, using quality metrics that do not require a reference image \citep{benfenati2025early}, and the running variance approach described below.

\paragraph*{Running Variance Estimator}
\cite{wang2021early} investigate strategies to detect the near-peak performance point, as measured by PSNR, by minimising a rolling estimate of the reconstruction variance across optimisation steps.
This is used as a proxy for the reconstruction error.
In particular, they define the running variance 
\begin{align}
    \text{VAR}(k) \coloneqq \frac{1}{W} \sum_{w=0}^{W-1} \left\| \Bx^{(k+w)} - \frac{1}{W} \sum_{i=0}^{W-1} \Bx^{(k+i)} \right\|^2,
\end{align}
where $\Bx^{(k)}$ is the DIP reconstruction at iteration $k$ and $W \in \N$ is the window size. The intuition is that $\text{VAR}(k)$ follows a U-shaped curve: the DIP quickly learns the general structure of the image at the start of training, resulting in a large VAR. At the valley of the curve, the reconstructions $\{\Bx^{(k)}\}$ cluster around a stable solution. As training continues and the DIP begins fitting measurement noise, high-frequency components are amplified, causing the running variance to increase again.
We present the early stopping algorithm for a general metric $g(k)$ in Algorithm~\ref{alg:early_stopping}.
Experimental results show that this criterion often yields stable and robust performance for image denoising, but it can be severely suboptimal for medical imaging, such as CT. 

\begin{algorithm}[H]
\caption{Early stopping criterion (adapted from \cite{wang2021early})}
\label{alg:early_stopping}
\begin{algorithmic}[1]
\Require Metric $g(k)$ evaluated at iteration $k$, patience parameter $\mathfrak{p}$, relative decrease threshold $\delta$
\State $g_{\text{min}} \gets \infty$ \Comment{Best metric value seen so far}
\State $k \gets 0$ \Comment{Current iteration}
\State $k_{\text{min}} \gets \infty$ \Comment{Iteration with best metric value}
\While{$k \leq k_{\text{min}} + \mathfrak{p}$}
    \If{$g(k) < \delta \cdot g_{\text{min}}$}
        \State $g_{\text{min}} \gets g(k)$
        \State $k_{\text{min}} \gets k$
    \EndIf
    \State $k \gets k + 1$
\EndWhile
\State \Return $k_{\text{min}}$
\end{algorithmic}
\end{algorithm}

\subsection{Total Variation Regularisation}
DIP implicitly regularises the inverse problem by favouring reconstructions that can be generated by a convolutional neural network. 
While effective in many cases, this implicit prior can be insufficient for severely ill-posed inverse problems. 
To address this limitation and further guide the reconstruction towards desirable features, recent approaches incorporate an explicit regulariser $\mathcal{J}$ to the DIP \eqref{eq:dip_vanilla}, giving the penalised objective
\begin{align}
    \label{eq:dip+reg}
    \min_{\Btheta \in \R^p} \frac{1}{2}\| \BA f_\Btheta(\Bz) - \By^\delta \|_2^2 + \alpha \mathcal{J}(f_\Btheta(\Bz)).
\end{align}
This approach has been successfully applied in both CT \citep{baguer2020computed} and image restoration tasks \citep{liu2019image}. In these studies, the regulariser is chosen as the (isotropic) total variation (TV) \citep{rudin1992nonlinear}, ${\rm TV}(\Bx)=\sum_{i=1}^n \|(\BD\Bu)_i\|_2$, where  $\BD\Bu=(\BD_x\Bu, \BD_y\Bu)\in\R^{n\times 2}$, is the discretisation of the gradient operator $\nabla$, giving rise to the \textit{DIP+TV} formulation 
\begin{align}
    \label{eq:dip+tv}
    \min_{\Btheta \in \R^p} \frac{1}{2}\| \BA f_\Btheta(\Bz) - \By^\delta \|_2^2 + \alpha {\rm TV}(f_\Btheta(\Bz)).
\end{align}
Although the TV regulariser is non-smooth, the penalised objective \eqref{eq:dip+reg} is in practice still minimised using gradient-based optimisers and automatic differentiation, see Algorithm \ref{alg:dip+tv}.\footnote{Note that such violations of non-smoothness are rather common in deep learning, as most neural network architectures use non-smooth activation functions (e.g., ReLU or Leaky ReLU), such that \eqref{eq:dip+reg} is non-smooth even for a smooth regulariser.} 

\begin{algorithm}[t]
\caption{Deep Image Prior + TV \citep{baguer2020computed}}
\label{alg:dip+tv}
\begin{algorithmic}[1]
\Require Forward operator $\mathbf{A}$, observation $\mathbf{y}^\delta$, maximum number of iteration $N_\text{max}$, model input $\Bz$, model architecture $f_\Btheta$, regularisation parameter $\alpha$
\For{$k=1$ to $N_\text{max}$}
    \State $\tilde{\Bx} =  f_\Btheta(\Bz)$
    \State $L(\Btheta) = \frac{1}{2} \| \BA \tilde{\Bx} - \By^\delta \|_2^2  + \alpha \text{TV}(\tilde{\Bx}) $
    \State Take gradient step w.r.t.~$\Btheta$ using $\nabla L(\Btheta)$
\EndFor
\State \textbf{return} $f_\Btheta(\Bz)$
\end{algorithmic}
\end{algorithm}

\paragraph*{DIP+TV via Half Quadratic Splitting} 
We now explore an alternative to the gradient-based optimisation of the DIP+TV. In particular, we employ splitting methods to treat the smooth and non-smooth components separately. 
To our knowledge, this approach has not been previously applied in the context of DIP regularisation.
By introducing an auxiliary variable $\Bu \in \R^n$, with a constraint $\Bu = f_\Btheta(\Bz)$, the problem \eqref{eq:dip+reg} can be reformulated as
\begin{align}\label{eq:dip+tv_hqs}
    \min_{\Btheta \in \R^p, \Bu \in \R^n} \frac{1}{2}\| \BA f_\Btheta(\Bz) - \By^\delta \|_2^2 + \alpha \mathcal{J}(\Bu) + \frac{\mu}{2} \| f_\Btheta(\Bz) - \Bu \|_2^2.
\end{align}
Formulations of this type are commonly referred to as half-quadratic splitting (HQS) methods. 
The main advantage of \eqref{eq:dip+tv_hqs} lies in the decoupling of network parameters $\Btheta$ from the regulariser, which is applied only to the auxiliary variable~$\Bu$. This augmented formulation is solved via alternating minimisation, allowing $\Btheta$ to be optimised independently of the regularisation term~$\mathcal{J}$, as
\begin{align}
    \Btheta^{(k+1)} &= \argmin_{\Btheta \in \R^p} \frac{1}{2}\| \BA f_\Btheta(\Bz) - \By^\delta \|_2^2 + \frac{\mu_k}{2} \| f_\Btheta(\Bz) - \Bu^{(k)} \|_2^2 \label{eq:altern1} \\ 
    \Bu^{(k+1)} &= \argmin_{\Bu \in \R^n}  \mathcal{J}(\Bu) + \frac{\mu_k}{2\alpha} \| f_{\Btheta^{(k+1)}}(\Bz) - \Bu \|_2^2, \label{eq:altern2}
\end{align}
for some initial $\Bu^{(0)}, \Btheta^{(0)}$ and an increasing schedule for $\mu_k$, $k=0, \dots, K-1$. Pseudo code for this splitting approach is given in Algorithm \ref{alg:dip+tv+hqs}.

For a convex regularisation functional $\mathcal{J}$, \eqref{eq:altern2} can be expressed using the proximal mapping
\begin{align}
    \Bu^{(k+1)} &= \text{prox}_{\alpha/\mu_k \mathcal{J}}(f_{\Btheta^{(k+1)}}(\Bz)). 
\end{align}
Since the proximal operator is known in closed form only for a select few functions, evaluating the above equation typically requires computing an approximate solution of another minimisation problem in each iteration in order to compute $\text{prox}_{\alpha/\mu_k\mathcal{J}}$.
If $N_{\rm sub}$ gradient updates are used to solve the original problem, allocating $K/N_{\rm sub}$ updates to each subproblem in the alternating scheme maintains the same computational cost concerning neural network evaluations. 
While computing the proximal mapping can be computationally expensive, solving \eqref{eq:dip+reg} also incurs the cost of evaluating $\nabla \mathcal{J}$ at each gradient step. 
Further, instead of solving the proximal mapping using an iterative method, one can also approximate the solution, e.g., \cite{chandler2025closed} propose a closed-form approximation for TV. Further, the computation cost can be reduced by applying ideas such as ProxSkip \citep{papoutsellis2026proxskip}, where the proximal mapping is not evaluated at every iteration.

\begin{algorithm}[t]
\caption{Deep Image Prior + TV (HQS)}
\label{alg:dip+tv+hqs}
\begin{algorithmic}[1]
\Require Forward operator $\mathbf{A}$, observation $\mathbf{y}^\delta$, splitting strength $\lambda > 0$, number of outer iterations $K$, number of inner iterations $N_\text{sub}$, model input $\Bz$, sequence of regularisation parameters $\{\alpha_k\}_{k=1}^K$
\State Initialise $\mathbf{u}^{(0)}=\mathbf{0}$ 
\For{$k=1$ to $K$}  
    \State $\mu_k = \lambda / \alpha_k$
    \For{$i=1$ to $N_\text{sub}$}
    \State $\tilde{\Bx} =  f_\Btheta(\Bz)$ 
    \State $L(\Btheta) = \frac{1}{2} \| \BA \tilde{\Bx} - \By^\delta \|_2^2 + \frac{\mu_k}{2} \| \tilde{\Bx}- \mathbf{u}^{(k-1)} \|_2^2$
    \State Take gradient step w.r.t.~$\Btheta$ using $\nabla_\Btheta L(\Btheta)$
    \EndFor
\State $\mathbf{u}^{(k)} = \text{prox}_{\alpha_k \text{TV}}(f_\Btheta(\Bz))$ 
\EndFor
\State \textbf{return} $f_\Btheta(\Bz)$
\end{algorithmic}
\end{algorithm}

\paragraph*{Weighted Total Variation and DIP via Splitting}
In \cite{cascarano2021admmtvdip} the authors consider the weighted TV (WTV) regulariser $\reg(\Bu)=\sum_{i=1}^n \alpha_i \|(\BD\Bu)_i\|_2$,
where $n$ is the image size, i.e., $\Bx\in\R^n$, $\BD\Bu=(\BD_x\Bu, \BD_y\Bu)\in\R^{n\times 2}$, is the discretisation of the gradient operator $\nabla$, and $(\alpha_i)_{i=1}^n$ are pixel-wise weights.
The corresponding DIP optimisation problem
\begin{equation}
\min_{\Btheta \in \R^p} \frac{1}{2} \| \BA f_{\Btheta}(\Bz) - \By \|_2^2 + \sum_{i=1}^{n} \alpha_i \| (\BD f_{\Btheta}(\Bz))_i \|_2,
\end{equation}
can then also be solved via splitting methods. 
Introducing the constraint $\BD f_{\Btheta}(\Bz) = \Bu$
the problem can be solved\footnote{See also \url{https://github.com/sedaboni/ADMM-DIPTV} for implementation details} with alternating direction method of multipliers (ADMM) by adding a Lagrangian $\Bla \in \R^{n \times 2}$ as 
\begin{equation}
\begin{aligned}
\frac12\| \BA f_{\Btheta}(\Bz) - \By\|_2^2 
+ \sum_{i=1}^{n} \alpha_i\| u_i \|_2  + \frac{\gamma}{2} \| \BD f_{\Btheta}(\Bz) - \Bu \|_2^2 
+ \langle \Bla, \BD f_{\Btheta}(\Bz) - \Bu \rangle,
\end{aligned}
\end{equation}
where $\gamma>0$.
This can now be solved with ADMM iterations
\begin{align}
\label{eq:wtv_admm1}
\Btheta^{(k+1)} &\in \argmin_{\Btheta} \; \frac 12 \| \BA f_{\Btheta}(\Bz) - \By \|_2^2 
+ \frac{\gamma}{2} \left\| \BD f_{\Btheta}(\Bz) - \Bu^{(k)} + \Bla^{(k)}/\gamma \right\|_2^2 \\
\Bu^{(k+1)} &= \argmin_{\Bu} \; \sum_{i=1}^{n} \alpha_i \| u_i \|_2 
+ \frac{\gamma}{2} \left\| \Bu - \left( \BD f_{\Btheta^{(k+1)}}(\Bz) + \Bla^{(k)}/\gamma \right) \right\|_2^2 \\
\Bla^{(k+1)} &= \Bla^{(k)} + \gamma \left( \BD f_{\Btheta^{(k+1)}}(\Bz) - \Bu^{(k+1)} \right). \label{eq:wtv_admm3}
\end{align}
Note that the iterations above have a fundamentally different goal and behaviour to the iterations in Algorithm \ref{alg:dip+tv+hqs}. Namely, whereas the constraint in \eqref{eq:dip+tv_hqs} pairs image sized object, above the constraint pairs image gradient-sized objects. Whereas the advantage of the former is that regulariser and the weights are decoupled (which thus avoids backpropagating through the non-smooth TV term), it comes at a cost of having to compute the proximal operator.
On the other hand, ADMM iterations require backpropagating through a (translated) TV term, but the proximal term (the update for $\Bu$), is fully separable in $n$ variables (each belonging to $\R^2$), and has a closed form solution.

Setting all weights $\alpha_i$ constant recovers the standard TV regulariser, but the weights $(\alpha_i)_{i=1}^n$ can also be adaptive, by for example setting 
$\alpha_i=\frac 1n \frac{\|\BA f_\Btheta(\Bz)-\By\|^2}{\|(\BD f_\Btheta(\Bz))_i \|}$.
The splitting approach discussed in this section  can be extended to more general cases, where the regulariser is written as $\reg(\Bx)=\sum_{i=1}^n \alpha_i \reg_i(\BB_i f_\theta(\Bx))$, with adaptive weights updated analogously \citep{cascarano2023constrained}.

The latter work also considers using the Morozov discrepancy principle to select the regularisation weights, through the following constrained formulation of the regulariser DIP
\begin{equation}
\label{eq:constrained_dip}
\begin{split}
&\argmin_{\Btheta \in \R^p} \reg(f_\Btheta(\Bz)), \\ &\text{with } f_\Btheta(\Bz)\in S_\sigma:=\{f_\Btheta(\Bz) ~ \vert ~ \|\BA f_\Btheta(\Bz) -\By\|_2^2\leq \tau \sigma^2 m\},
\end{split}
\end{equation}
where $\tau>0$, $\sigma>0$ is the standard deviation of the measurement noise and $m$ is the dimension of the measurements. 
This is then solved via an alternating proximal gradient descent-ascent (PGDA) method\footnote{See \url{https://github.com/pcascarano/cDIP-RED-DIP-WTV} for implementation details}. 

\subsection{DIP with Learned Denoisers}
Instead of a hand-crafted regulariser $\reg$ we can also use an off-the-shelf denoiser in the 
variational DIP formulation \eqref{eq:dip+reg}, inspired of the success of plug-and-play denoising approaches \citep{zhang2021plug}. 

\textbf{Regularisation by denoising (RED)}
To our knowledge, the first such example for DIP was the RED regulariser \citep{romano2017little}, defined as 
\begin{align*}
    \reg (\Bx) = \frac 12 \Bx^T (\Bx-{\rm D}_\sigma(\Bx)),
\end{align*}
where ${\rm D}_\sigma$ is a denoiser and $\sigma$ the denoising strength. 
Under some conditions
on ${\rm D}_\sigma$, it can be shown that the resulting regulariser is convex and differentiable, and that its gradient is given by $\nabla \reg(\Bx) =\Bx-{\rm D}_\sigma(\Bx)$, see Section 5 in \cite{romano2017little} for details.
Thus, unlike for TV, using RED does not require computing the gradient of a non-differentiable regulariser in training.
Instead, computing the gradient requires evaluating the denoiser. 
When used for DIP, the resulting method is called DeepRED \citep{mataev2019deepred}\footnote{See \url{https://github.com/GaryMataev/DeepRED} for  for implementation details}.
It is important to note however that most denoisers used in practice (e.g., BM3D) do not satisfy the conditions that ensure $\nabla \reg(\Bx) =\Bx-{\rm D}_\sigma(\Bx)$ holds.
In particular, it can be shown that if the Jacobian of $\rm D_\sigma$ is not symmetric then a regulariser such that $\nabla \reg(\Bx) =\Bx-{\rm D}_\sigma(\Bx)$ holds does not exist \cite[Theorem 1]{reehorst2018regularization}.
Nevertheless, RED based methods still often show strong empirical performance, and some further modifications can be used to ensure their convergence.
In particular, \cite{reehorst2018regularization} reframe RED through the lens of denoising score-matching, which aims to approximate the gradient of the log-prior (the score) rather than the prior density itself
Leveraging this connection. the authors derive alternative, provably convergent RED-based algorithms.
In Algorithm \ref{alg:dip+red+apg} we provide an application of one such algorithm, using Nesterov accelrated proximal gradient, for a DIP based optimisation solver. 

\begin{algorithm}[t]
\caption{Deep Image Prior + RED (APG)}
\label{alg:dip+red+apg}
\begin{algorithmic}[1]
\Require Forward operator $\mathbf{A}$, observation $\mathbf{y}^\delta$, number of outer iterations $K$, number of inner iterations $N_\text{sub}$, model input $\Bz$, splitting strength $\lambda>0$, $t_0=1$ and $L>1$, denoiser $\rm D_\sigma$
\State Initialise $\mathbf{u}^{(0)}=\mathbf{0}$ 
\For{$k=1$ to $K$}  
    \For{$i=1$ to $N_\text{sub}$}
    \State $\tilde{\Bx} =  f_\Btheta(\Bz)$ 
    \State $L(\Btheta) = \frac{1}{2} \| \BA \tilde{\Bx} - \By^\delta \|_2^2 + \frac{\lambda L}{2}\| \tilde{\Bx}- \mathbf{u}^{(k-1)} \|_2^2$
    \State Take gradient step w.r.t.~$\Btheta$ using $\nabla_\Btheta L(\Btheta)$
    \EndFor
\State $t_k=\frac{1+\sqrt{1+4t_{k-1}^2}}{2}$    
\State $\mathbf{z}^{(k)} = \mathbf{x}^{(k)}+\frac{t_{k-1}-1}{t_k} (\mathbf{x}^{(k)}-\mathbf{x}^{(k-1)})$
\State $\mathbf{u}^{(k)} = \frac{1}{L}\rm D_\sigma(\mathbf{z}^{(k)}) - \frac{1-L}{L}\mathbf{z}^{(k)}$
\EndFor
\State \textbf{return} $f_\Btheta(\Bz)$
\end{algorithmic}
\end{algorithm}

\textbf{Splitting based approaches}
An alternative for incorporating  denoisers is to employ splitting methods, such as HQS or ADMM.
For instance, motivated by the new Algorithm \ref{alg:dip+tv+hqs}, we apply HQS to derive the following minimisation iterations
\begin{align}
    \Btheta^{(k+1)} &= \argmin_{\Btheta \in \R^p} \frac{1}{2}\| \BA f_\Btheta(\Bz) - \By^\delta \|_2^2 + \frac{\mu_k}{2} \| f_\Btheta(\Bz) - \Bu^{(k)} \|_2^2  \\ 
    \Bu^{(k+1)} &= \argmin_{\Bu \in \R^n}  \reg(\Bu) + \frac{\mu_k}{2\alpha} \| f_{\Btheta^{(k+1)}}(\Bz) - \Bu \|_2^2.
\end{align}

As discussed earlier, for a convex regulariser $\mathcal J$ the last step is equivalent to applying the proximal operator of $\reg$, which can be replaced with a denoiser
\begin{align*}
\Bu^{(k+1)} &= {\rm D}_{\alpha/\mu_k}(f_{\Btheta^{(k+1)}}(\Bz)).
\end{align*}
This formulation does not impose special conditions on the denoiser (such as those needed for RED), but requires tuning the schedule of weights $\mu_k$.
This can be avoided by using ADMM instead of HQS, which introduces the constraint and adds a Lagrangian, see Algorithm \ref{alg:dip+denoiser+admm}.

\begin{algorithm}[t]
\caption{Deep Image Prior + Denoiser (ADMM)}
\label{alg:dip+denoiser+admm}
\begin{algorithmic}[1]
\Require Forward operator $\mathbf{A}$, observation $\mathbf{y}^\delta$, ADMM coupling penalty $\beta > 0$, number of outer iterations $K$, number of inner iterations $N_\text{sub}$, model input $\Bz$, denoiser $D_\sigma$
\State Initialise $\mathbf{u}^{(0)}=\mathbf{0}$, $\Vmu^{(0)}=\mathbf{0}$
\For{$k=1$ to $K$}  
    \For{$i=1$ to $N_\text{sub}$}
    \State $\tilde{\Bx} =  f_\Btheta(\Bz)$ 
    \State $L(\Btheta) = \frac{1}{2} \| \BA \tilde{\Bx} - \By^\delta \|_2^2 + \frac{\beta}{2}\| \tilde{\Bx}- \mathbf{u}^{(k-1)}+\Vmu^{(k-1)} \|_2^2$
    \State Take gradient step w.r.t.~$\Btheta$ using $\nabla_\Btheta L(\Btheta)$
    \EndFor
\State $\Bx^{(k)} =  f_\Btheta(\Bz)$
\State $\mathbf{u}^{(k)} = \rm D_{\sigma}(\mathbf{x}^{(k)}+\Vmu^{(k-1)})$
\State $\Vmu^{(k)} = \Vmu^{(k-1)}+ (\mathbf{x}^{(k)}-\mathbf{u}^{(k)})$
\EndFor
\State \textbf{return} $f_\Btheta(\Bz)$
\end{algorithmic}
\end{algorithm}

\subsection{Self-Guidance Methods}
Empirical studies have shown that the choice of the network input can have a significant impact on the reconstruction quality.
Instead of using random noise as input, some approaches use a coarse reconstruction, e.g., the filtered backprojection for CT \citep{barbano2022educated}, or incorporate available reference images, e.g., structurally similar images \citep{zhao2020reference}. 
These informed input images provide additional information to guide the DIP, which aligns with observations in Figure \ref{fig:jacobian_eigenfunctions} showing that NTK eigenfunctions are indeed sensitive to the choice of the network input. 
However, the obvious limitation of such approaches is the reliance on the suitable reference image. 

Self-guided DIP \citep{liang2025analysis} overcomes this limitation by setting up a joint optimisation problem in which the network parameters and the network input are optimised at the same time. 
In particular, the self-guided DIP is trained using the following loss function  
\begin{equation}
\label{eq:self_guided_dip}
    (\hat{\Btheta}, \hat{\Bz}) \in \argmin_{\Btheta \in \R^p, \Bz \in \R^n} \frac{1}{2}\| \BA \E_\eta[f_\Btheta(\Bz + \eta)] - \By^\delta \|_2^2 + \frac{\alpha}{2}\|  \E_\eta[f_\Btheta(\Bz + \eta)] - \Bz \|_2^2.
\end{equation}
The first term is the usual data fidelity term, enforcing consistency between the reconstructed image and the observed data. Unlike the original DIP formulation, this term is not evaluated at a single network output $f_\Btheta(\Bz)$, but at the expectation $\E_\eta[f_\Btheta(\Bz + \eta)]$, where $\eta$ denotes random perturbations of the input. The second term is a self-guided denoising (or consistency) term. It penalises deviations between the expected network output and the input $\Bz$. Overall, the objective couples reconstruction and denoising: the network learns parameters that explain the measurements, while simultaneously adapting its input toward a stable, denoised representation. \cite{liang2025analysis} observe that the method is robust against the choice of noise distribution for $\eta$, and obtain similar results for both Gaussian and uniform noise. 
Their experiments use uniform noise $\eta \sim U(0,m)$ with a maximum value of $m = \max(|\Bz|)/2$, corresponding to a rather high noise level compared to traditional denoising methods. 
In practice, the noise level might need some tuning for a given inverse problem at hand.
The final reconstruction is taken as the denoised mean, either as  $\hat{\Bx} = \E_\eta[f_{\hat{\Btheta}}(\hat{\Bz} + \eta)]$ or using an exponentially moving average over training iterations\footnote{\url{https://github.com/sjames40/Self-Guided-DIP}},  
\begin{align}
    \Bx^{(k)} = \gamma \Bx^{(k-1)} + (1-\gamma) \E_\eta[f_{\Btheta^{(k)}}(\Bz^{(k)} + \eta)],
\end{align}
where $(\Btheta^{(k)}, \Bz^{(k)})$ are the parameters at the $k$-th training step. The full algorithm for the self-guided DIP is given in Algorithm \ref{alg:self_guided_dip}. 

\begin{algorithm}[t]
\caption{Self-Guided DIP \citep{liang2025analysis}}
\label{alg:self_guided_dip}
\begin{algorithmic}[1]
\Require Forward operator $\mathbf{A}$, observation $\mathbf{y}^\delta$, regularisation strength $\lambda > 0$, weighting $\alpha \in [0,1)$, $N_\eta$ number of noise samples, noise distribution $P_\eta$, maximum number of iteration $N_\text{max}$
\State $\Bx^{(0)} = \mathbf{0}$
\For{$k=1$ to $N_\text{max}$}
    \State $\tilde{\Bx} = \frac{1}{N_\eta} \sum_{i=1}^{N_\eta} f_\Btheta(\Bz + \eta_i), \quad \eta_i \sim P_\eta$
    \State $L(\Btheta, \Bz) = \frac{1}{2} \| \BA \tilde{\Bx} - \By^\delta \|_2^2 + \frac{\lambda}{2} \| \tilde{\Bx}- \Bz \|_2^2$
    \State Take gradient step w.r.t.~$(\Btheta,\Bz)$ using $\nabla L(\Btheta, \Bz)$
    \State $\Bx^{(k)} = \alpha \Bx^{(k-1)} + (1-\alpha) \tilde{\Bx}$
\EndFor
\State \textbf{return} $\mathbf{x}^{(N_\text{max})}$
\end{algorithmic}
\end{algorithm}

The self-guided DIP has a higher computational cost compared to the vanilla DIP due to two contributing factors.
First, the gradients are computed with respect to not only the network parameters but also the network input. 
Second, each training step requires computing the expectation of the DIP's output with respect to multiple noise initialisations. 

Building on top of the self-guided DIP, \cite{alkhouri2024image} recently proposed the \emph{autoencoding sequential DIP} (aSeqDIP), which considers a sequential optimisation problem instead of the joint optimisation problem in \eqref{eq:self_guided_dip}. 
The aSeqDIP is trained by iteratively solving the following optimisation problem
\begin{align}
\label{eq:aSeqDIP_1}
    \Btheta^{(k+1)} &= \argmin_{\Btheta \in \R^p}  \frac{1}{2}\| \BA f_\Btheta(\Bz^{(k)}) - \By^\delta \|_2^2 + \frac{\lambda}{2}\|  f_\Btheta(\Bz^{(k)}) - \Bz^{(k)} \|_2^2 \\ \label{eq:aSeqDIP_2}
    \Bz^{(k+1)} &= f_{\Btheta^{(k+1)}}(\Bz^{(k)}),
\end{align}
for $k=1, \dots, K$, and with $\Bz^{(0)} = \BA^T \By^\delta$ and $\Btheta^{(0)}$ as the initial network parameters. 
Subproblem \eqref{eq:aSeqDIP_1} needs to be solved using an additional optimisation scheme, and to speed-up the convergence the parameters are initialised using the solution of the previous problem $\Btheta^{(k-1)}$. 
However, the computational cost of the aSeqDIP can be similar to training a vanilla DIP. 
Assuming we use $N_\text{sub}$ gradient steps for every subproblem in \eqref{eq:aSeqDIP_1}, aSeqDIP has a similar computational cost as a DIP trained for $K N_\text{sub}$ iterations. 
The only computational difference is the computation of $K$ forward passes of the network in \eqref{eq:aSeqDIP_2} after solving each subproblem. 
The full algorithm is provided in Algorithm \ref{alg:aSeqDIP}.

\begin{algorithm}[t]
\caption{aSeqDIP \citep{alkhouri2024image}}
\label{alg:aSeqDIP}
\begin{algorithmic}[1]
\Require Forward operator $\mathbf{A}$, observation $\mathbf{y}^\delta$, regularisation strength $\lambda > 0$, number of outer iter. $K$, number of inner iter. $N_\text{sub}$, initialisation $\Bz^{(0)}$
\For{$k=1$ to $K$}
    \For{$i=1$ to $N_\text{sub}$}
    \State $\tilde{\Bx} =  f_\Btheta(\Bz^{(k-1)})$ 
    \State $L(\Btheta) = \frac{1}{2} \| \BA \tilde{\Bx} - \By^\delta \|_2^2 + \frac{\lambda}{2} \| \tilde{\Bx}- \Bz^{(k-1)} \|_2^2$
    \State Take gradient step w.r.t.~$\Btheta$ using $\nabla_\Btheta L(\Btheta)$
    \EndFor
    \State $\Bz^{(k)} =  f_\Btheta(\Bz^{(k-1)})$ 
\EndFor
\State \textbf{return} $\Bz_{K}$
\end{algorithmic}
\end{algorithm}

\section{Computational Considerations}\label{sec:computational_cost}

The DIP is typically deployed by training a CNN from scratch for each reconstruction, which is often computationally prohibitive.
This is in stark contrast to the standard supervised paradigm in deep learning, where once a network has been trained it can be efficiently deployed at inference time.
To address this limitation, several computational strategies have been developed to accelerate the DIP.
The primary computational bottleneck arises from differentiating either through the network or through the forward operator, and in particularly unfavourable cases, both operations may incur substantial overhead.
In this section, we focus on three strategies to reduce the computational time.
First, the optimisation can be accelerated by warm-starting with appropriately selected network weights instead of random initialisations, thereby decreasing the number of required gradient updates.
Second, the optimisation can be constrained to a suitable subspace, effectively reducing the dimensionality of the parameter space and enabling the use of second-order optimisation methods. 
Lastly, instead of using the gradient of the data-fidelity term with respect to the full forward operator, stochastic gradient updates over subsampled operators can be employed.

\subsection{Warm-starting the Optimisation} 
\label{sec:edip}
As discussed in Section~\ref{sec:algo_design_choices}, initialisation of the DIP weights plays an important role in improving and accelerating optimisation.
In the standard framework, the network is initialised randomly and trained from scratch for every new measurement. 
However, when ground truth data is available, or when informative data can be simulated, they can be leveraged to identify a suitable set of initial weights.
This is achieved in \cite{barbano2022educated} through a two-stage process.
The first stage consists of supervised pretraining of the network on a dataset $\mathcal{D} = \{(\mathbf{x}^n, \mathbf{y}^n)\}_{n=1}^N$ of images $\Bx$ and corresponding measurements $\By$.
This is done by optimising
\begin{equation}\label{eq:supervised_loss}
    \hat{\Btheta} \in \argmin_{\Btheta \in \R^p}N^{-1}\mkern-25mu\sum_{(\Bx^{n},\By^{n})\in \mathcal{D}}\mkern-18mu\| f_{\Btheta}(\BA^\dagger\By^{n}) - \Bx^{n}\|_2^2.
\end{equation}
In the second stage the network parameters are fine-tuned in order to adapt to the target reconstruction task, by solving either \eqref{eq:dip_vanilla} or \eqref{eq:dip+reg}.
The results show that the pretrained DIP exhibits strong input robustness.
To explain this effect, \cite{barbano2022educated} propose a spectral analysis by linearising the pre-trained neural network, see also Section~\ref{sec:linearised_dip}.
Specifically, the spectrum of the Jacobian of the linearised network around the pretrained parameters shows that pretraining induces a sparsification of the basis while imposing a relevance shift.
In other words, it induces a hierarchical structure that manifests as a shift towards the parameters of the decoder.
This shift is coherent with the observation that most of the heavy-lifting of representing the target image is actually done by the decoder.

Note that warm-starting is only advisable when the
pretraining dataset closely resembles the data to be reconstructed, as substantial distributional differences may lead to poor feature transfer and, consequently, degraded reconstruction performance. However, assessing such similarity is more an art than a science.
Warm-starting is a form of test-time training (TTT) \citep{sun2020test}, as the network parameters are adapted on the fly for a new task.
\cite{darestani2022test} apply the TTT framework to image reconstruction, and show that TTT can improve the performance on unseen data.

\subsection{Subspace DIP}
In a manner complementary to network pretraining, the computational costs can be mitigated by constraining the optimisation to a sparse linear subspace of the parameters \citep{barbano2024image}.
The low-dimensionality of the subspace reduces the tendency to fit noise (alleviating the need for early stopping) and enables the use of stable second order optimisation methods, e.g., natural gradient descent \citep{amari1998natural} or limited-memory Broyden–Fletcher–Goldfarb–Shanno \citep{liu1989limited}.

The methodology boils down to two steps.
The first step consists of identifying a subspace of parameters that is low-dimensional and easy to work with, but contains a rich enough set of parameters to fit $\By$.
This is achieved by leveraging the pre-training trajectory, i.e., the sequence of network parameters $\Btheta^{(i)} \in \R^p$ recorded over the course of the supervised optimisation in \eqref{eq:supervised_loss}, to construct basis vectors. In particular, $K$ parameter vectors are sampled at uniformly spaced checkpoints and then stacked into a matrix $\Theta \in \mathbb{R}^{p \times K}$. 
A low-dimensional parameter subspace is then identified using singular vectors from an approximate SVD, $\Theta \approx \mathbf{USV}^T,$
where top-$k$ left singular vectors $\mathbf{U}\in\mathbb{R}^{p \times k}$ are retained, and $k \ll K$ is the dimensionality of the chosen subspace.
The orthonormal basis $\mathbf{U}$ can be further sparsified by computing leverage scores \citep{drineas2012fast}, associated with each DIP parameter as
\[
\rho_{i} = \sum_{j=1}^{k} [\mathbf{U}]^{2}_{ij},\;i=1,\dots,p,
\]
and retaining the basis vector entries corresponding to the $d_{\tau} \ll p$ largest leverage scores.
This can be achieved by applying a mask $\mathbf{M} \in \{0,1\}^{p \times p}$ satisfying  $[\mathbf{M}]_{ii} = \mathbf{1}_{\operatorname{top}_{d_\tau}}(i)$ 
where $\operatorname{top}_{d_{\tau}}$ is the set of indices of $d_\tau$ largest leverage scores $\rho_1, \rho_2, \ldots, \rho_{p}$.
The sparse basis $\mathbf{MU}$ contains at most $d_{\tau} \cdot {k}$ non-zero entries.
The objective \eqref{eq:var_obj} is then rewritten in terms of the sparse basis coefficient $\boldsymbol{c}\in \mathbb{R}^{{k}}$,
\begin{equation}\label{eq:sub-dip}
    \argmin_{\boldsymbol{c} \in \R^k} L(\gamma(\boldsymbol{c})),\;\rm{where}\;\gamma(\boldsymbol{c}):= \Btheta + \mathbf{MU}\boldsymbol{c},
\end{equation}
and $L(\gamma) = \frac{1}{2} \| \BA f_\gamma(\Bz) - \By^\delta \|_2^2$.
This restricts the DIP parameters $\Btheta \in \R^p$ from pre-training with \eqref{eq:supervised_loss} to change only along the sparse subspace $\mathbf{MU}$.
In particular, the number of optimisation variables are drastically reduced as the number of sparse basis coefficients is way smaller than the number of network parameters. 
Note that \eqref{eq:sub-dip} differs from traditional deep learning loss functions in that the local curvature matrix can be computed and stored accurately and efficiently, without resorting to restrictive approximation of its structure.

\subsection{Stochastic Gradient Descent}
A main computational bottleneck of solving variational problems such as \eqref{eq:var_obj} using gradient-based methods is the cost of the evaluation of the forward operator.
This is generally attributed to the cost of matrix-vector products, which scales linearly with both the number of input and output dimensions.
For example, the gradient of the least-squares data fidelity term, i.e.,
\begin{align*}
    \nabla_\Bx \frac 12 \|\BA\Bx-\By^\delta\|_2^2 = \BA^T(\BA \Bx - \By^\delta),
\end{align*}
requires one evaluation of both the forward operator and its adjoint, which in large scale tomography pipelines (e.g., in CT or PET), can incur a high per-iteration computational cost.
This is different from typical issues in deep learning optimisation, which is concerned with the computational costs with respect to the network parameters \citep{bottou2018optimization}.

This gave rise to methods that aim to reduce the per-iteration computational cost by using only a part of the measurement data in each iteration. We refer to \cite{ehrhardt2025guide} for a recent review of these methods. 
In inverse imaging problems this is done by partitioning the forward operator $\BA$ into $N_b$ components and defining functions $\ell_i$ such that $\ell(\BA\Bx;\By^\delta)= \sum_{i=1}^{N_b}\ell_i(\BA_i\Bx;\By_i^\delta)$.
In the simplest, stochastic gradient descent case, the full gradient $\nabla_\Bx \ell(\BA\Bx;\By^\delta)$ can now be estimated by sampling a random index $i$ and computing a gradient only of $\ell(\BA_i\Bx;\By_i^\delta)$ \citep{jin2018regularizing}.

This methodology has shown a strong performance in practice and has led to remarkable speed-ups in medical image reconstruction regimes, see \cite{chambolle2018stochastic, twyman2022investigation, tang2020practicality} for a few examples. It is important to emphasise that there are many algorithmic variations, improvements, and modifications that target different components of the reconstruction pipeline. 

It is straightforward to extend this to the DIP framework. Instead of computing the gradient of $\frac 12 \|\BA f_\Btheta(\Bz)-\By\|_2^2$, with respect to $\Btheta$, which would involve applying the full forward operator $\BA$ on the current network output,  the gradient of $\frac{N_b}{2}\|\BA_i f_\Btheta(\Bz)-\By_i\|_2^2$ can instead be computed.
However, due to the additional computational overhead involved in backpropagating through the network with respect to $\Btheta$, which is not the case for non-deep learning based approaches, the computational benefits might be limited.

\section{Numerical Experiments}
\label{sec:numerical_experiments}
In this section, we first compare several approaches aimed at mitigating overfitting problem in the DIP. 
We then analyse the computational cost of the vanilla DIP relative to the methods introduced in Section \ref{sec:computational_cost}. 
Experiments are conducted on real-world $\mu$CT data (see Section \ref{sec:dataset} for details), using the U-Net architecture described in Section \ref{sec:network_architecture}. 
To ensure a fair comparison, all methods use the same random seed, yielding identical initial network parameters and random inputs (where applicable). 
While the performance of the DIP is highly problem-dependent, our results are intended to highlight trends and provide practical insights. In particular, we compare 
\begin{itemize}
    \item Vanilla DIP (see Alg. \ref{alg:vanilla_dip})
    \item Vanilla DIP with FBP input 
    \item DIP+TV using the joint optimisation (see Alg. \ref{alg:dip+tv})
    \item DIP+TV using the splitting approach (see Alg. \ref{alg:dip+tv+hqs})
    \item Self-guidance DIP (see Alg. \ref{alg:self_guided_dip})
    \item aSeqDIP (see Alg. \ref{alg:aSeqDIP})
    \item DIP+RED (with APG, see Alg. \ref{alg:dip+red+apg})
    \item DIP+Denoiser (with HQS and with ADMM, see Alg. \ref{alg:dip+denoiser+admm})
\end{itemize}
All hyperparameters, e.g., regularisation strengths, step size or number of inner iterations, are chosen to maximise the peak-signal-to-noise-ratio (PSNR). 
The methods that employ a pretrained denoiser use DRUNet \citep{zhang2021plug}, obtained from the deepinv library\footnote{\url{https://deepinv.github.io/deepinv/}}, and exponential averaging of the output. The code for all experiments is publicly available.\footnote{\url{https://github.com/alexdenker/deep-image-prior-handbook}} 

\begin{table}[t]
\caption{Performance comparison of different DIP-based methods. The TV baseline achieves a PSNR of $26.88$dB.}
\centering
\begin{tabular}{lccccc}
\toprule
 & \multicolumn{2}{c}{\textbf{Peak Result}} 
 & \multicolumn{2}{c}{\textbf{Early Stopping}} 
 & \textbf{Final} \\ 
\cmidrule(lr){2-3}\cmidrule(lr){4-5}
\textbf{Method} & PSNR & Iter. & PSNR & Iter. & PSNR \\
\midrule
Vanilla DIP & $28.58$ & $1282$ & $26.10$ & $2189$ & $17.55$ \\
Vanilla DIP (FBP input) & $28.61$ & $1293$ & $28.04$ & $917$ & $17.61$ \\
DIP+TV (joint, Alg.~\ref{alg:dip+tv}) & $28.40$ & $1595$  & $27.52$  & $2785$ & $26.32$ \\
DIP+TV (splitting, Alg.~\ref{alg:dip+tv+hqs}) & $28.66$ & $1405$ & $28.27$ & $1405$ & $26.59$ \\
Self-Guidance DIP & $28.83$ & $9858$ & $26.11$ & $2554$ & $28.82$ \\
aSeqDIP & $28.71$ & $6880$ & $28.15$ & $5700$ & $18.66$ \\
DIP+RED (APG, Alg,~\ref{alg:dip+red+apg}) & $28.70$ & $2640$ & $27.37$ & $4440$ & $24.04$ \\
DIP+Denoiser (HQS) & $30.04$ & $5980$ & $29.33$ & $4680$ & $27.18$ \\
DIP+Denoiser (ADMM, Alg. \ref{alg:dip+denoiser+admm}) & $30.98$ & $9460$ & $30.97$ & $10000$ & $30.97$ \\
\bottomrule
\end{tabular}
\label{tab:dip_results}
\end{table}

\subsection{\texorpdfstring{$\mu$}{Micro-}CT Walnut Dataset}
\label{sec:dataset}
For the numerical evaluation, we consider a real high-resolution $\mu$CT dataset. 
Specifically, we reconstruct a $(501\,\mathrm{px})^2$ slice from a highly sparse cone-beam scan of a walnut. 
The data are acquired using 60 projection angles and 128 detector pixels, resulting in a total of $m = 7680$ measurements.
Reconstructions are compared against a ground truth obtained via classical reconstruction from original full data acquired at three source positions, with 1200 angles and 768 detector pixels.
The data is taken from the public walnut dataset published by \cite{der2019cone}, which comprises 42 specimens. 
This task requires capturing both coarse structures and fine image details, making it a suitable proxy for industrial $\mu$CT applications.
We use measurements of ``Walnut~1'' acquired from source position (orbit)~2.
For the 2D reconstruction, we select the slice offset by $+3$\,px from the mid-plane.
Further details can be found in the supplementary material of \cite{barbano2022educated}, and the implementation of the geometry used here is available at \href{https://github.com/educating-dip/educated_deep_image_prior}{educated-dip.github.io}.

\subsection{Network Architecture}
\label{sec:network_architecture}
We use the U-Net architecture \citep{ronneberger2015u} following the implementation in \cite{barbano2022educated}, replacing bilinear upsampling with nearest neighbour upsampling\footnote{In the current PyTorch version (version 2.3.1) bilinear upsampling is non deterministic on GPU, even with a fixed seed.}, ensuring consistent and reproducible results. 
The network consists of six scales (encoder/decoder depth), each with $128$ feature channels. 
The input resolution is $501 \times 501$, and the encoder progressively downsamples the feature maps to resolutions of $251\times 251$, $126 \times 126$, $63 \times 63$, $32 \times 32$, and $16 \times 16$. 
Skip connections are added from the encoder to the decoder at the two lowest resolution levels. 
All convolutional layers use a Leaky ReLU activation function. 
In total, the network has roughly $3$M parameters.

\captionsetup[subfigure]{skip=2pt}
\begin{figure}[htbp]
    \centering

    \makebox[0.25\textwidth][c]{\small\textbf{Best Reconstruction}}%
    \hspace{0.02\textwidth}%
    \makebox[0.25\textwidth][c]{\small\textbf{Early stopping}}%
    \hspace{0.02\textwidth}%
    \makebox[0.25\textwidth][c]{\small\textbf{Final Reconstruction}}\\[4pt]
    
    \begin{subfigure}{0.25\textwidth}
        \includegraphics[width=\linewidth]{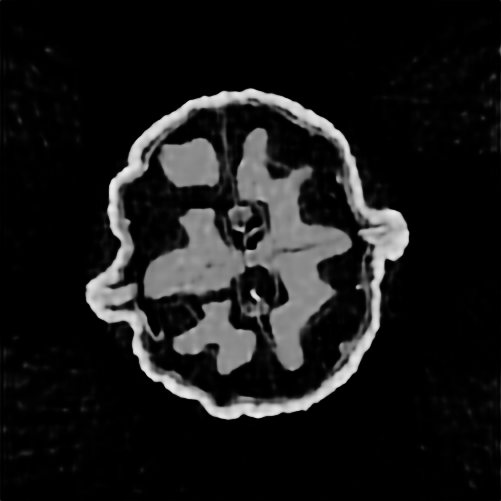}
        \caption*{}
    \end{subfigure}
    \begin{subfigure}{0.25\textwidth}
        \includegraphics[width=\linewidth]{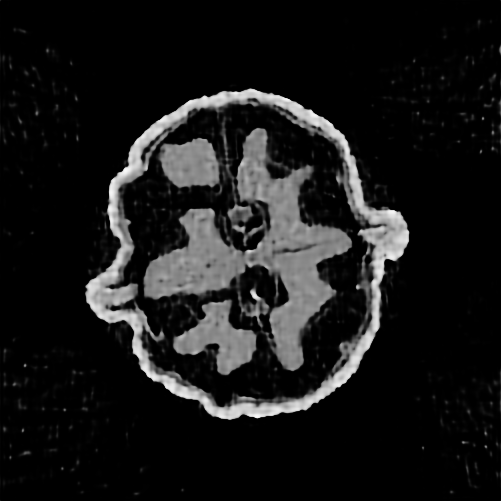}
        \caption*{\textbf{Vanilla DIP}}
    \end{subfigure}
    \begin{subfigure}{0.25\textwidth}
        \includegraphics[width=\linewidth]{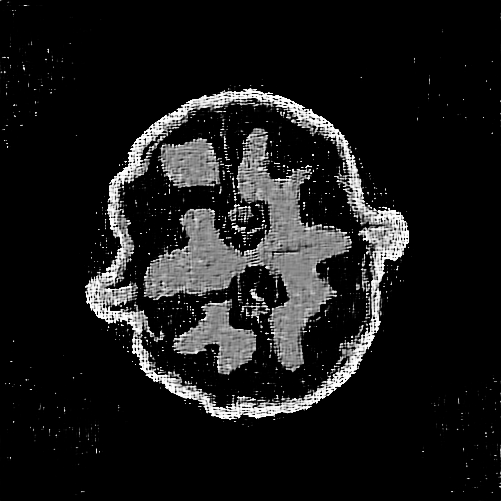}
        \caption*{}
    \end{subfigure}

    \vspace{-10pt}
    \rule{0.8\textwidth}{0.4pt}
    \vspace{2pt}

    \begin{subfigure}{0.25\textwidth}
        \includegraphics[width=\linewidth]{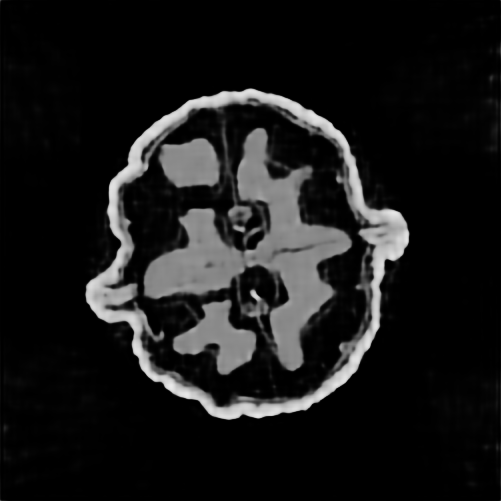}
        \caption*{}
    \end{subfigure}
    \begin{subfigure}{0.25\textwidth}
        \includegraphics[width=\linewidth]{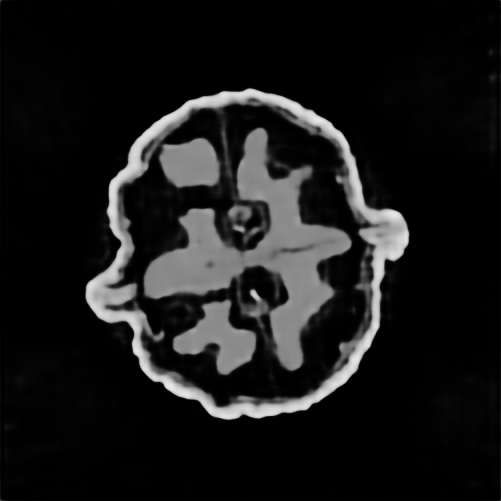}
        \caption*{\textbf{Vanilla DIP (FBP input)}}
    \end{subfigure}
    \begin{subfigure}{0.25\textwidth}
        \includegraphics[width=\linewidth]{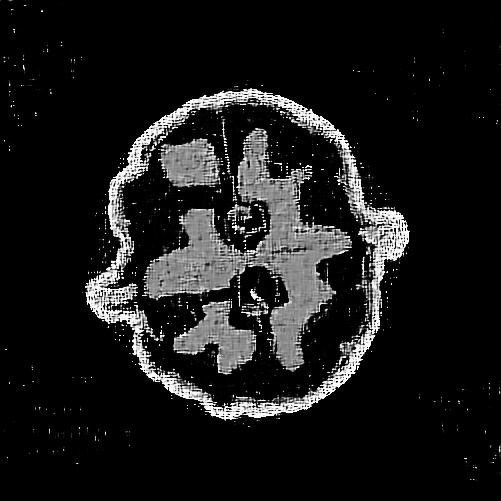}
        \caption*{}
    \end{subfigure}

    \vspace{-10pt}
    \rule{0.8\textwidth}{0.4pt}
    \vspace{2pt}

    \begin{subfigure}{0.25\textwidth}
        \includegraphics[width=\linewidth]{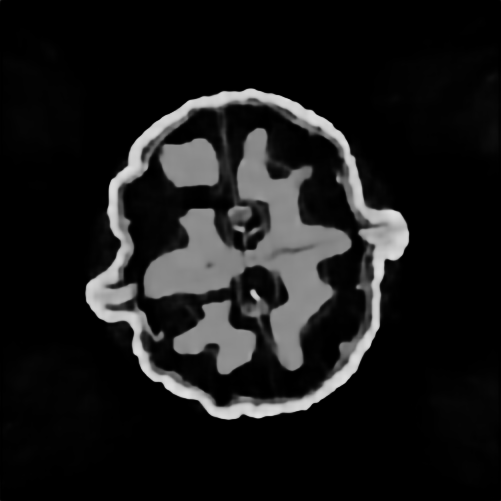}
        \caption*{}
    \end{subfigure}
    \begin{subfigure}{0.25\textwidth}
        \includegraphics[width=\linewidth]{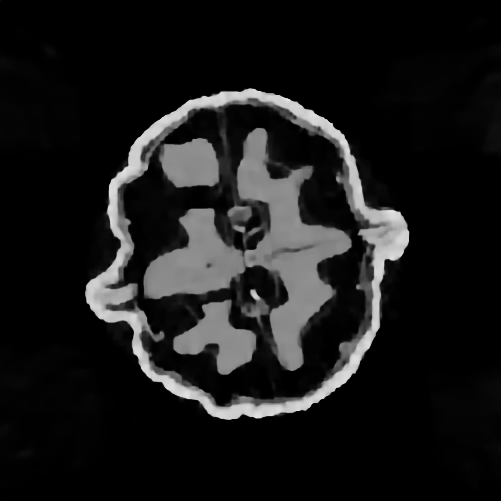}
        \caption*{\textbf{DIP+TV (joint, Alg.\ref{alg:dip+tv})}}
    \end{subfigure}
    \begin{subfigure}{0.25\textwidth}
        \includegraphics[width=\linewidth]{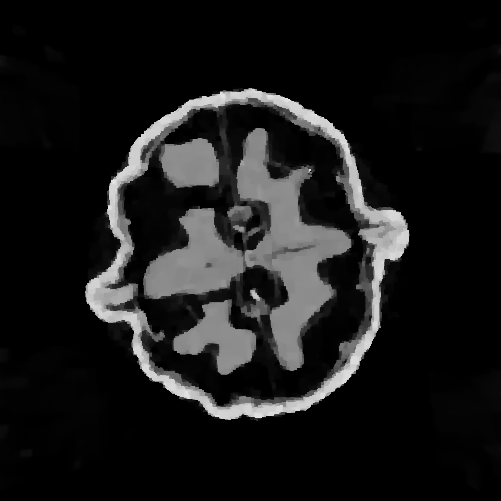}
        \caption*{}
    \end{subfigure}

    \vspace{-10pt}
    \rule{0.8\textwidth}{0.4pt}
    \vspace{2pt}

    \begin{subfigure}{0.25\textwidth}
        \includegraphics[width=\linewidth]{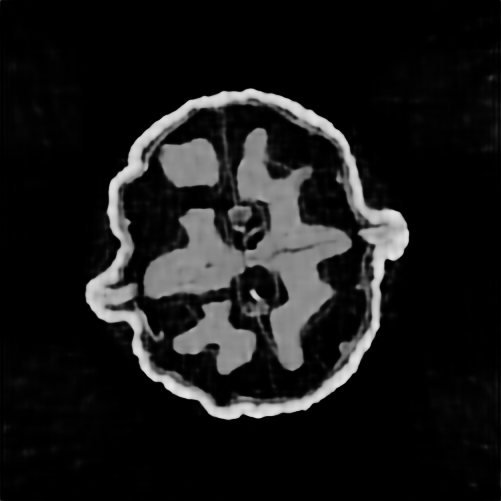}
        \caption*{}
    \end{subfigure}
    \begin{subfigure}{0.25\textwidth}
        \includegraphics[width=\linewidth]{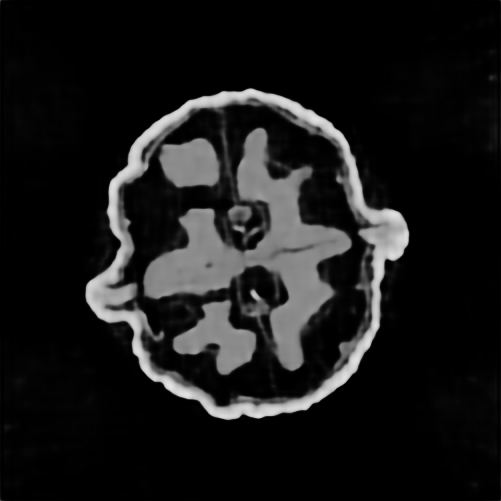}
    \caption*{\textbf{DIP+TV (splitting, Alg.~\ref{alg:dip+tv+hqs})}}
    \end{subfigure}
    \begin{subfigure}{0.25\textwidth}
        \includegraphics[width=\linewidth]{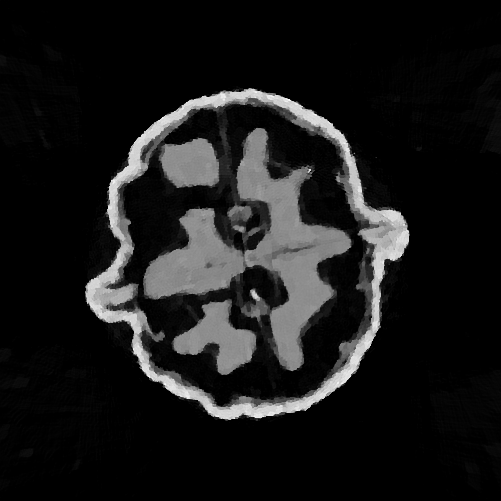}
        \caption*{}
    \end{subfigure}

    \caption{$\mu$CT walnut reconstructions for the different DIP methods. Left: Best reconstruction. Middle: Early stopping reconstruction. Right: Final reconstruction.}
    \label{fig:all_recos}
\end{figure}

\begin{figure}[t]
    \centering
    \begin{subfigure}[t]{0.48\linewidth}
        \centering
        \includegraphics[width=\linewidth]{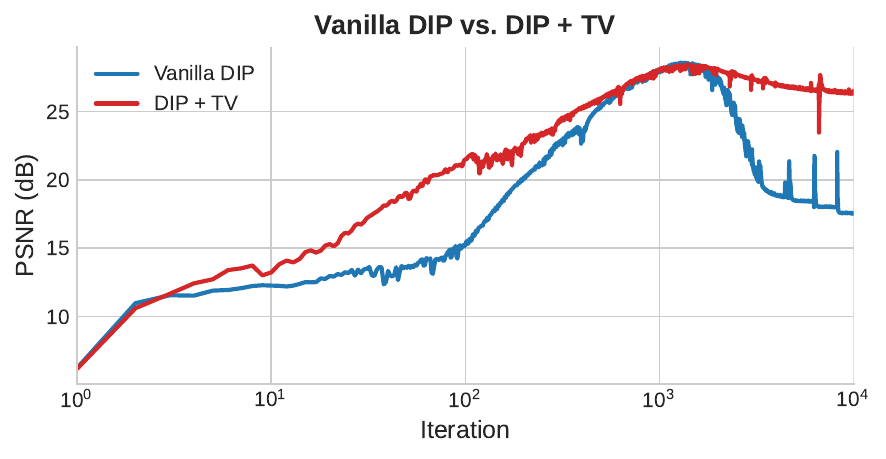}
        \caption{Vanilla DIP vs. DIP+TV (joint).}
        \label{fig:vanilla_vs_tv}
    \end{subfigure}
    \hfill
    \begin{subfigure}[t]{0.48\linewidth}
        \centering
        \includegraphics[width=\linewidth]{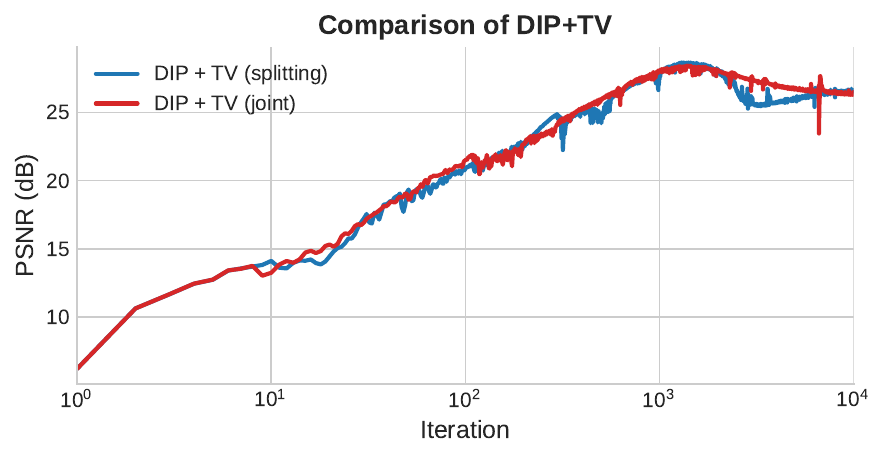}
        \caption{Joint vs. splitting (HQS).}
        \label{fig:joint_vs_splitting}
    \end{subfigure}
    \caption{Evolution of PSNR during optimisation for all considered DIP variants.}
    \label{fig:dip_all_comparison}
\end{figure}

\subsection{Results}

We show quantitative results in Table~\ref{tab:dip_results}.
We evaluate three PSNR metrics: the peak PSNR, defined as the maximum PSNR achieved during optimisation; the early stopping PSNR, obtained using the running variance estimator described in Section~\ref{sec:early_stopping}; and the final PSNR, measured after $\num{10000}$ iterations. 
As a classical baseline, we also compute a TV-regularised reconstruction, which reaches a PSNR of $26.87$ dB. 
The TV regularisation parameter was tuned to directly maximise PSNR, providing an optimistic upper bound for traditional TV-based methods.

Figure~\ref{fig:all_recos} and Figure~\ref{fig:all_recos_v2} presents qualitative results, showing the reconstruction with the highest PSNR, the early-stopped reconstruction, and the final reconstruction.
As expected, we observe a degradation of the image quality for vanilla DIP. 
Adding additional regularisation into the objective, as for the DIP+TV, can mitigate this issue:
the best reconstruction and the reconstruction after \num{10000} iterations show only minor visual differences. 
Similarly, both self-guidance DIP and aSeqDIP do not rely as strongly on early stopping and show fewer image artifacts at the final iteration. 
Lastly, methods that employ a deep denoiser instead of TV achieve the highest peak PSNR, in particular the splitting-based approaches. 
This observation is similar to plug-and-play approaches, where deep denoiser often achieve better results \citep{zhang2017learning,zhang2021plug}. 
Moreover, increasing the coupling parameter that enforces the constraint reduces the maximal achievable PSNR, but also mitigates overfitting.

In Figure~\ref{fig:vanilla_vs_tv} we further investigate this behaviour by comparing the evolution of the PSNR for vanilla DIP and DIP+TV. 
The results show that their peak PSNR is nearly identical, which is expected since they use the same U-Net architecture and initialisation.
Note that the peak PSNR for all methods is higher than the optimistic TV-regularised reconstruction. 
However, their optimisation dynamics differ substantially.
The vanilla DIP exhibits a sharp peak, i.e., a narrow window of high PSNR, whereas DIP+TV produces a broader, more stable peak, simplifying the selection of a suitable stopping point.
Moreover, after $\num{10000}$ iterations, DIP+TV yields a higher final PSNR, suggesting reduced overfitting relative to vanilla DIP.
Note that the early stopping rule was developed for the vanilla DIP and it seems to not adapt well to the aSeqDIP. 
However, we notice that early stopping was not necessary for the aSeqDIP and we observe less overfitting, due to the additional regularisation in \eqref{eq:aSeqDIP_1}. 

\captionsetup[subfigure]{skip=2pt}
\begin{figure}[t]
    \centering

    \makebox[0.25\textwidth][c]{\small\textbf{Best Reconstruction}}%
    \hspace{0.02\textwidth}%
    \makebox[0.25\textwidth][c]{\small\textbf{Early stopping}}%
    \hspace{0.02\textwidth}%
    \makebox[0.25\textwidth][c]{\small\textbf{Final Reconstruction}}\\[4pt]

    \begin{subfigure}{0.25\textwidth}
        \includegraphics[width=\linewidth]{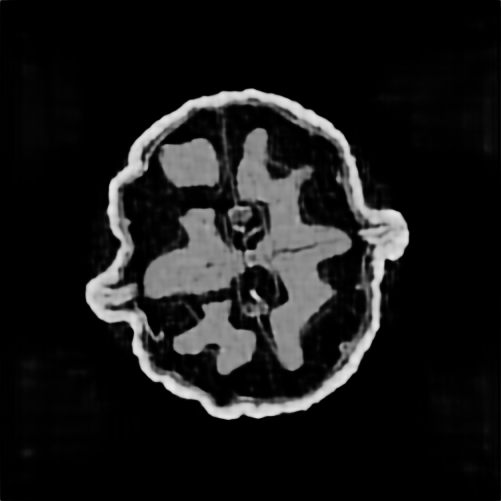}
        \caption*{}
    \end{subfigure}
    \begin{subfigure}{0.25\textwidth}
        \includegraphics[width=\linewidth]{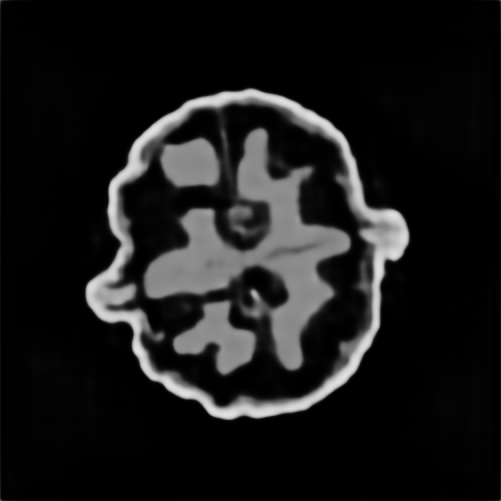}
        \caption*{\textbf{Self-Guidance DIP}}
    \end{subfigure}
    \begin{subfigure}{0.25\textwidth}
        \includegraphics[width=\linewidth]{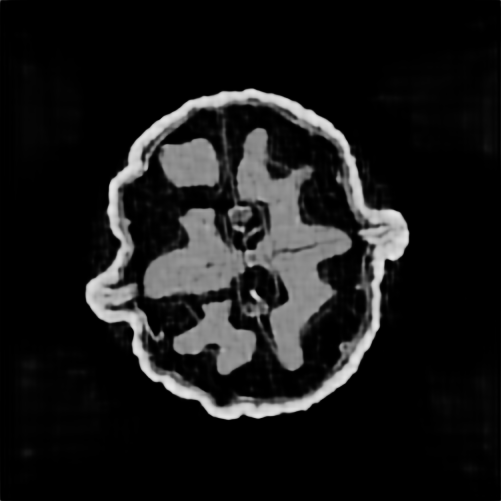}
        \caption*{}
    \end{subfigure}

    \vspace{-10pt}
    \rule{0.8\textwidth}{0.4pt}
    \vspace{2pt}

    \begin{subfigure}{0.25\textwidth}
        \includegraphics[width=\linewidth]{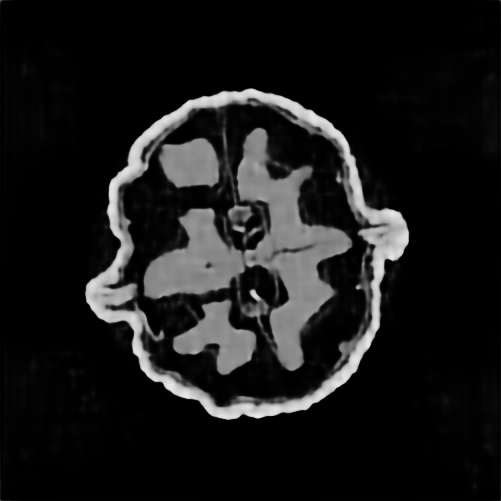}
        \caption*{}
    \end{subfigure}
    \begin{subfigure}{0.25\textwidth}
        \includegraphics[width=\linewidth]{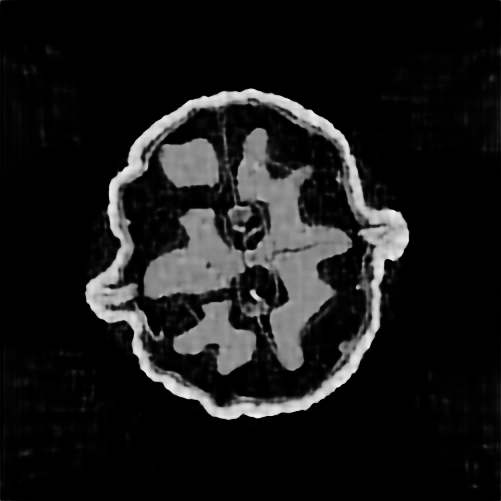}
        \caption*{\textbf{aSeqDIP}}
    \end{subfigure}
    \begin{subfigure}{0.25\textwidth}
        \includegraphics[width=\linewidth]{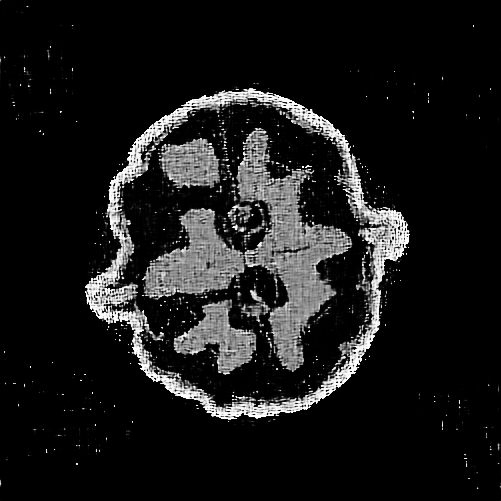}
        \caption*{}
    \end{subfigure}

    \caption{$\mu$CT walnut reconstructions for the different DIP methods. Left: Best reconstruction. Middle: Early stopping reconstruction. Right: Final reconstruction.}
    \label{fig:all_recos_v2}
\end{figure}

\captionsetup[subfigure]{skip=2pt}
\begin{figure}[t]
    \centering

    \makebox[0.25\textwidth][c]{\small\textbf{Best Reconstruction}}%
    \hspace{0.02\textwidth}%
    \makebox[0.25\textwidth][c]{\small\textbf{Early stopping}}%
    \hspace{0.02\textwidth}%
    \makebox[0.25\textwidth][c]{\small\textbf{Final Reconstruction}}\\[4pt]

    \begin{subfigure}{0.25\textwidth}
        \includegraphics[width=\linewidth]{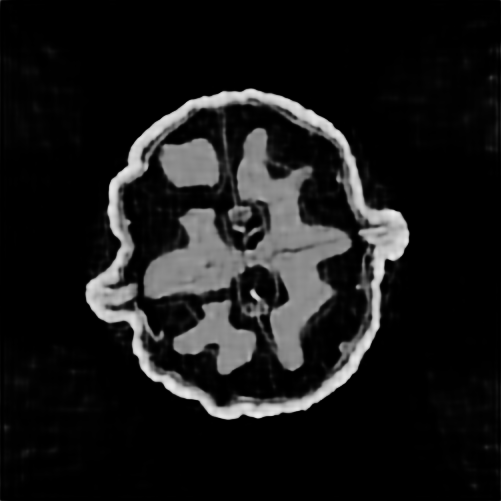}
        \caption*{}
    \end{subfigure}
    \begin{subfigure}{0.25\textwidth}
        \includegraphics[width=\linewidth]{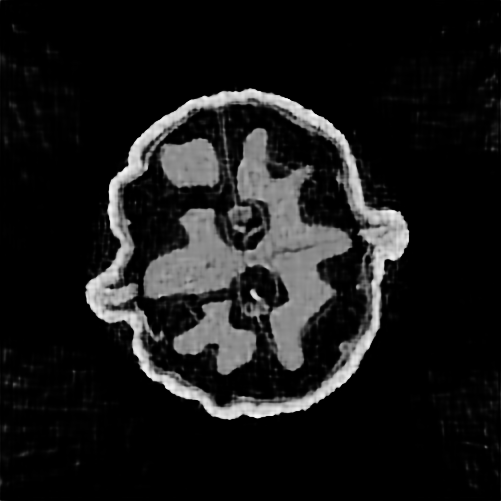}
        \caption*{\textbf{DIP+RED}}
    \end{subfigure}
    \begin{subfigure}{0.25\textwidth}
        \includegraphics[width=\linewidth]{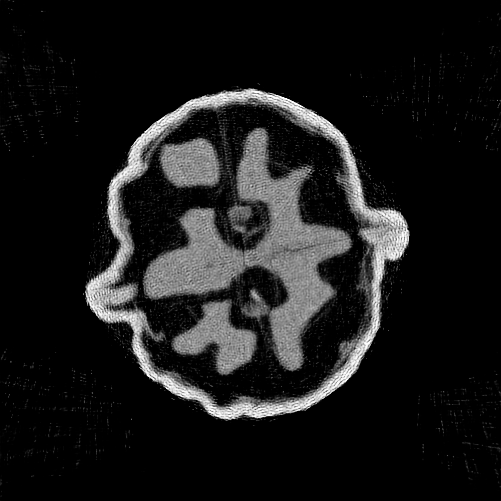}
        \caption*{}
    \end{subfigure}

    \vspace{-10pt}
    \rule{0.8\textwidth}{0.4pt}
    \vspace{2pt}

    \begin{subfigure}{0.25\textwidth}
        \includegraphics[width=\linewidth]{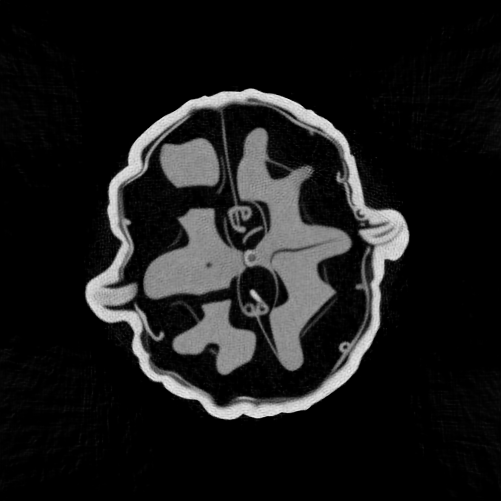}
        \caption*{}
    \end{subfigure}
    \begin{subfigure}{0.25\textwidth}
        \includegraphics[width=\linewidth]{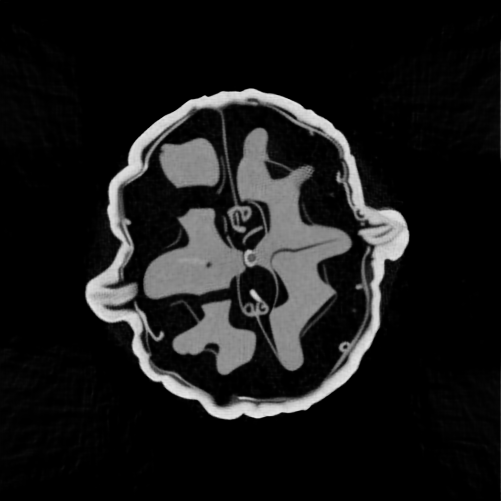}
        \caption*{\textbf{DIP + Denoiser (HQS)}}
    \end{subfigure}
    \begin{subfigure}{0.25\textwidth}
        \includegraphics[width=\linewidth]{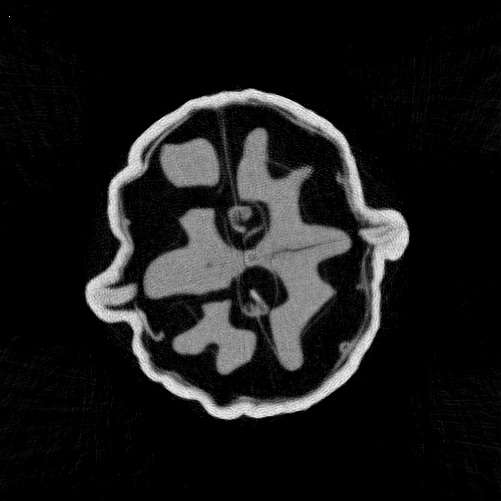}
        \caption*{}
    \end{subfigure}

    \vspace{-10pt}
    \rule{0.8\textwidth}{0.4pt}
    \vspace{2pt}

    \begin{subfigure}{0.25\textwidth}
        \includegraphics[width=\linewidth]{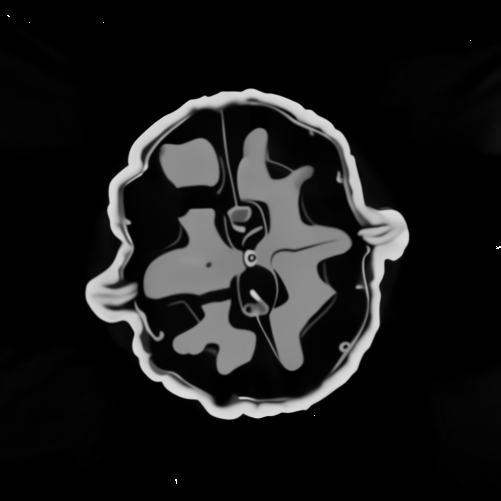}
        \caption*{}
    \end{subfigure}
    \begin{subfigure}{0.25\textwidth}
        \includegraphics[width=\linewidth]{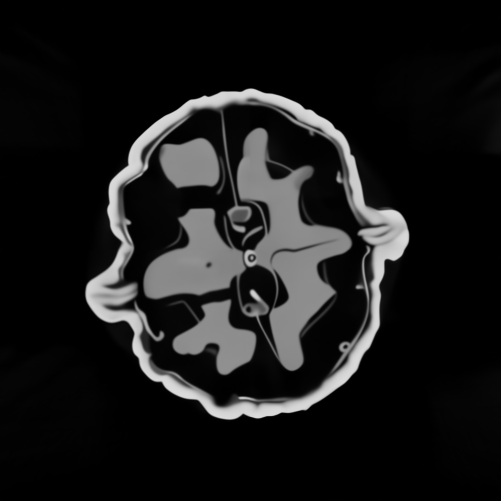}
        \caption*{\textbf{DIP+Denoiser (ADMM)}}
    \end{subfigure}
    \begin{subfigure}{0.25\textwidth}
        \includegraphics[width=\linewidth]{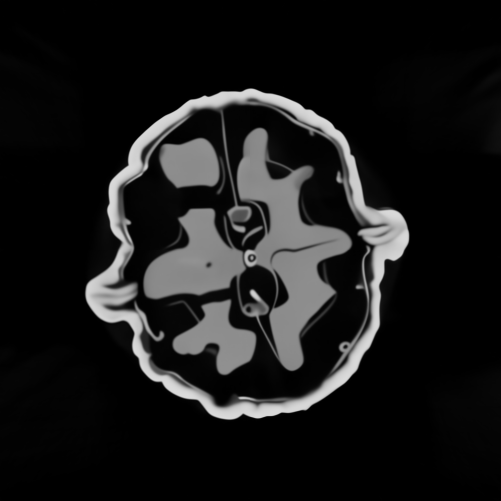}
        \caption*{}
    \end{subfigure}    

    \caption{$\mu$CT walnut reconstructions for the different DIP methods. Left: Best reconstruction. Middle: Early stopping reconstruction. Right: Final reconstruction.}
    \label{fig:all_recos_v3}
\end{figure}


In Figure~\ref{fig:joint_vs_splitting}, we compare the PSNR trajectories of the two TV-regularised DIP variants.
The difference is minor, with the splitting approach reaching a slightly higher peak PSNR.
However, since the splitting method incurs higher computational cost, as it is necessary to evaluate the TV proximal operator, we argue that the simpler joint approach is the more practical choice.

The self-guided DIP and splitting-based approaches employ exponential averaging to obtain the reconstruction (see line 6 in Algorithm~\ref{alg:self_guided_dip}).
Overfitting to noise is minimal, as indicated by the close values of final and peak PSNR, with the peak occurring relatively late (iteration $\num{9858}$) in the optimisation.
Exponential averaging results in a smoother PSNR curve over the iterations. In some cases this can also increases the computational time, as more iterations are necessary to obtain a suitable reconstruction. 
For all the other approaches, the peak PSNR is achieved fairly early in the optimisation process, mostly between $\num{1000}$ to $\num{1500}$ iterations.

Finally, we observe that there are only minor differences in the peak performance of the different DIP variants. This can be explained by the challenging setting of the inverse problem. In particular, we only have $m=7680$ measurements, i.e., $60$ angles and $128$ detector pixels, and the image size is $501 \times 501$ corresponding to a high undersampling. This acquisition setup violates the Nyquist sampling criterion, making accurate reconstruction feasible only in the presence of a strong prior.


\paragraph*{Role of Initialisation of Network Weights}
In the preceding experiments the DIP weights were initialised randomly.
We now compare this baseline with a warm-started variant, as discussed in Section~\ref{sec:edip}. 
To this end, the network is first pre-trained using the supervised loss in \eqref{eq:supervised_loss} on a dataset of synthetic ellipses, using the same setup as in \cite{barbano2022educated}. 
The network is trained to map noisy FBP reconstructions to their corresponding clean ground truth images, thereby learning a reconstruction prior aligned with the forward model. 
Figure~\ref{fig:edip} shows representative samples from the pre-training dataset, the FBP reconstruction of the walnut used as network input, and the initial output of the pre-trained model. 
Notably, the pre-trained DIP produces a smooth and structurally coherent reconstruction of the walnut, providing a substantially improvement, compared to the DIP with randomly initialised weights.
As shown in Figure~\ref{fig:pretrained_dip}, the warm-started DIP begins with a PSNR of approximately $24\,\mathrm{dB}$, whereas the randomly initialised DIP starts near $6\,\mathrm{dB}$. 
Although the warm-started variant attains a slightly lower peak PSNR (by roughly $0.6\,\mathrm{dB}$), its optimisation trajectory is markedly smoother and demonstrates reduced susceptibility to overfitting.
In this example, we consider only the vanilla DIP training setup without regularisation. However, all techniques aimed at improving optimisation and mitigating overfitting are also applicable to the warm-started DIP.

\begin{figure}[t]
    \centering
    \begin{subfigure}[t]{0.32\textwidth}
        \centering
        \includegraphics[width=\linewidth]{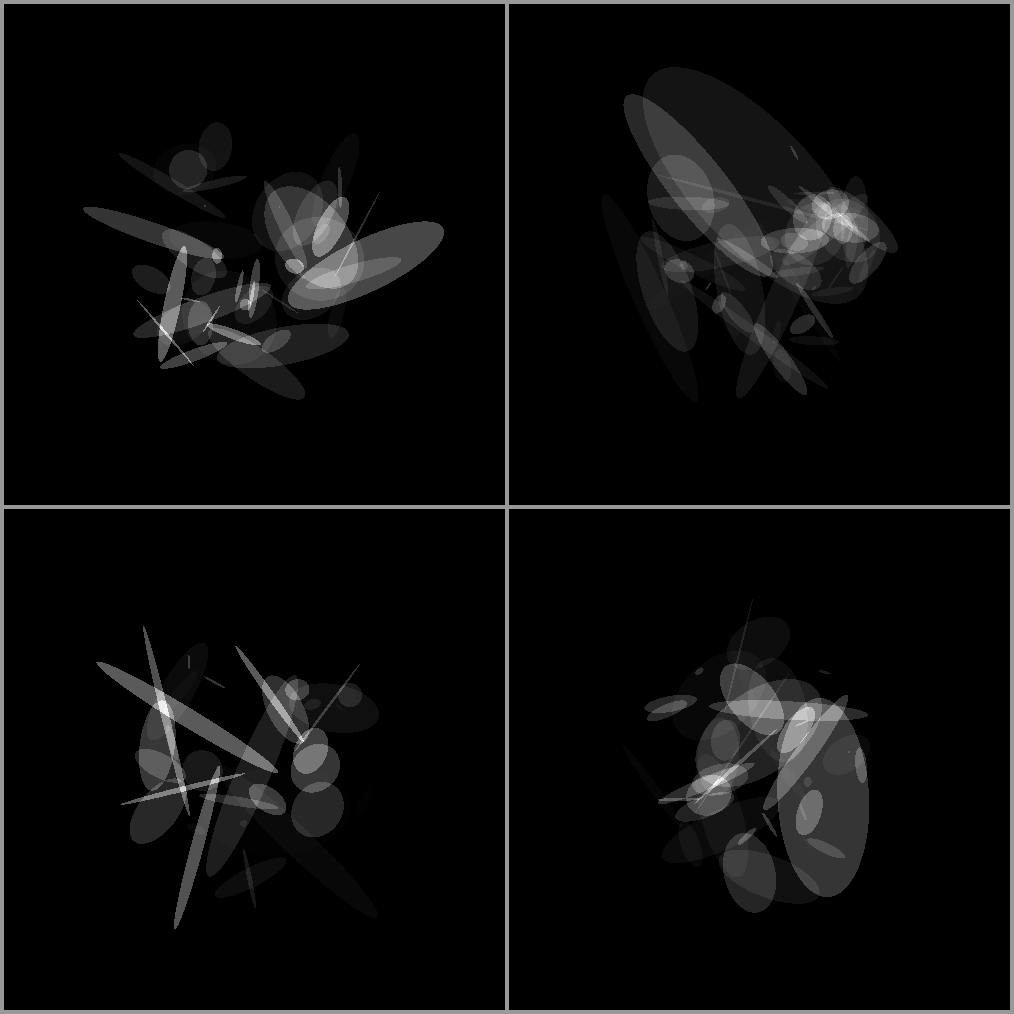}
        \caption{Training data}
    \end{subfigure}
    \hfill
    \begin{subfigure}[t]{0.32\textwidth}
        \centering
        \includegraphics[width=\linewidth]{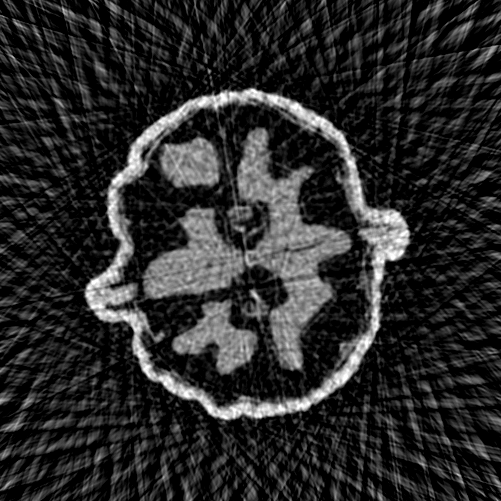}
        \caption{FBP recon.}
    \end{subfigure}
    \hfill
    \begin{subfigure}[t]{0.32\textwidth}
        \centering
        \includegraphics[width=\linewidth]{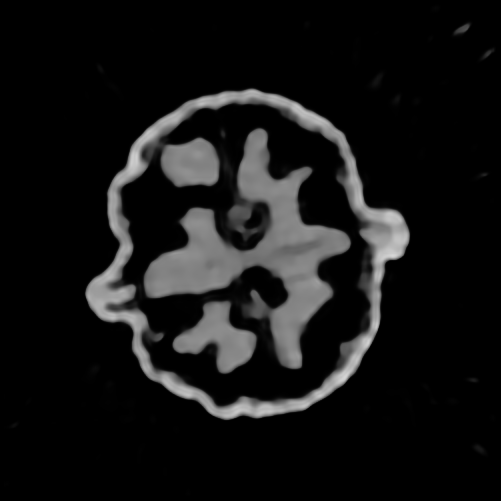}
        \caption{Initial recon.}
    \end{subfigure}
    \caption{We show \textbf{(a)} samples from the synthetic ellipses dataset used for pre-training, \textbf{(b)} the FBP reconstruction of the walnut used as input to the warm-started DIP \textbf{(c)} and the initial reconstruction of the warm-started DIP.}
    \label{fig:edip}
\end{figure}

\begin{figure}[t]
    \centering
    \includegraphics[width=0.8\linewidth]{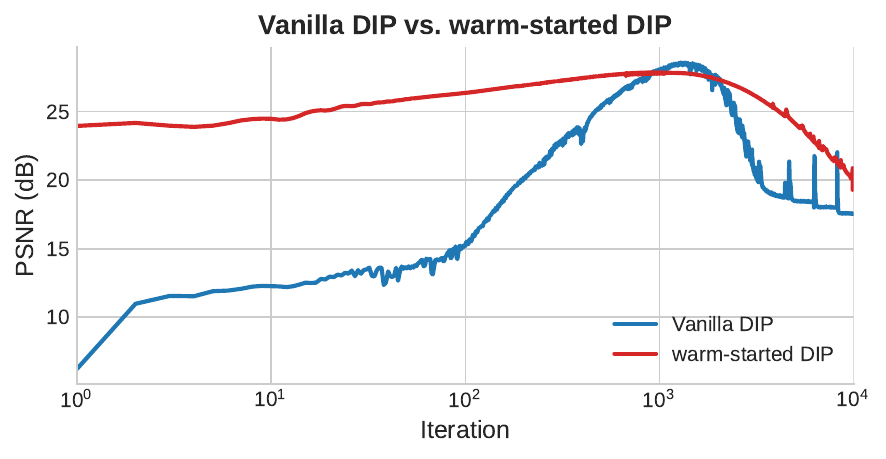}
    \caption{The evolution of the PSNR during the optimisation process for the vanilla DIP and the warm-started DIP.}
    \label{fig:pretrained_dip}
\end{figure}

\section{Conclusion and Further Work}
\label{sec:conclusion}

While our focus is on CT image reconstruction, the DIP has been used across a wide range of (medical) inverse problems, including positron emission tomography \citep{gong2018pet} or electrical impedance tomography \citep{liu2023deepeit}.
Scaling the DIP to 3D image reconstruction, where the architecture directly parametrises the volume, see \cite{singh20233d} for an application to PET, is an important extension, as the scarcity of 3D supervised datasets makes unsupervised methods like the DIP particularly relevant.

The Deep Diffusion Prior \citep{chung2024deep}, based on the steerable conditional diffusion framework  \citep{barbano2025steerable}, generalizes the DIP framework by embedding the reconstruction process within a generative diffusion model. This shifts the inductive bias from network architecture alone to the learned score dynamics of the diffusion process.

More broadly, the DIP occupies a unique position in the modern deep learning landscape of image reconstruction. At a time when the field is dominated by large-scale supervised (foundation) models trained on massive datasets, the DIP demonstrates that the architecture of a neural network alone can act as an effective regulariser. It decouples the performance from the size and choice of the dataset and instead depends on the structural properties of the underlying network architecture. 
However, the exact bias introduced by architectural choices, e.g., the network depth, the activation function or upsampling methods, remain poorly understood and lack a specific mathematical characterisation.
This contrast with classical hand-crafted regularisation function, such as TV, where the promoted features are well understood and predictable.
Furthermore, the DIP can be sensitive to early stopping rules and initialisation. 
Despite these limitations, the DIP is a promising approach, combining the advances in deep learning architectures with classical variational methods. 
Further, the DIP demonstrates that deep learning can contribute to inverse problems, even in the absence of training data. In applications such as medical imaging where data is scarce or obtaining it is expensive, the DIP remains relevant. 
Further work should aim at a more detailed theoretical understanding of the priors induced by specific network architectures. 

\section*{Acknowledgments}
Simon Arridge, Alexander Denker and Ricardo Barbano acknowledge support by the UK EPSRC programme grant EP/V026259/1.
Zeljko Kereta acknowledges support by the UK EPSRC grant EP/X010740/1.

\bibliographystyle{abbrvnat}
\bibliography{bibliography}

\end{document}